\theoremstyle{plain}
  \newtheorem{theorem}{Theorem}
\theoremstyle{definition}
  \newtheorem{definition}{Definition}
\theoremstyle{remark}
\theoremstyle{plain}
  \newtheorem*{theorem*}{Theorem}
  \newtheorem*{lemma*}{Lemma}
  \newtheorem*{corollary*}{Corollary}
  \newtheorem*{proposition*}{Proposition}
  \newtheorem*{claim*}{Claim}
\begin{document}
\title{Quantum interpretations of AWPP and APP} 
\author{Tomoyuki Morimae}
\email{morimae@gunma-u.ac.jp}
\affiliation{ASRLD Unit, Gunma University, 1-5-1 Tenjin-cho Kiryu-shi
Gunma-ken, 376-0052, Japan}
\author{Harumichi Nishimura}
\email{hnishimura@math.cm.is.nagoya-u.ac.jp}
\affiliation{
Graduate School of Information Science, 
Nagoya University,
Chikusa-ku, Nagoya, Aichi, 464-8601 Japan}

\begin{abstract}
AWPP is a complexity class introduced by Fenner, Fortnow, Kurtz, and Li,
which is defined using GapP functions.
Although it is an important class as the best upperbound of BQP,
its definition seems to be somehow artificial, and therefore it would be better if
we have some ``physical interpretation" of AWPP.
Here we provide a quantum physical interpretation of AWPP:
we show that AWPP is equal to the class of problems
efficiently solved by a quantum computer
with the ability of postselecting an event whose
probability is close to an FP function.
This result is applied to
also obtain a quantum physical interpretation of APP.
In addition, we consider a ``classical physical analogue" of these results,
and show that a restricted version
of ${\rm BPP}_{\rm path}$ contains ${\rm UP}\cap{\rm coUP}$
and is contained in WAPP.
\end{abstract}
\maketitle

\section{Introduction}
AWPP is a complexity class introduced by Fenner, Fortnow, Kurtz, and Li~\cite{toolkit}
to understand the structure of counting complexity classes
(see also Refs.~\cite{Li,Fenner}). It is defined as follows:

\begin{definition}
\label{def:AWPP}
A language $L$ is in ${\rm AWPP}$ iff 
for any polynomial $r$, there exist $f\in {\rm FP}$ and $g\in {\rm GapP}$ such that
for all $w$, $f(w)>0$ and
\begin{itemize}
\item[1.]
If $w\in L$ then $1-2^{-r(|w|)}\le \frac{g(w)}{f(w)}\le 1$.
\item[2.]
If $w\notin L$ then $0\le \frac{g(w)}{f(w)}\le2^{-r(|w|)}$.
\end{itemize}
Here, ${\rm FP}$ is the class of functions 
from bit strings to integers that are computable in polynomial time by a Turing machine.
A ${\rm GapP}$ function~\cite{GapP}
is a function from bit strings to integers that is equal to
the number of accepting paths minus that of rejecting paths of a nondeterministic
Turing machine which takes the bit strings as input.
The FP function $f$ can be replaced with $2^{q(|w|)}$ for a polynomial 
$q$~\cite{GapP,Li},
and the error bound $(2^{-r(|w|)},1-2^{-r(|w|)})$ can be replaced with, 
for example, $(1/3,2/3)$~\cite{Fenner}.
\end{definition}

Interestingly, AWPP was shown to contain BQP,
by Fortnow and Rogers~\cite{FR} in 1997,
and since then it has been the best upperbound of BQP
(in classical complexity classes). 
Here, BQP is a class of problems efficiently solved by a quantum computer:

\begin{definition}
\label{def:BQP}
A language $L$ is in ${\rm BQP}$ iff there exists
a uniform family $V=\{V_n\}_n$ of polynomial-size quantum circuits 
such that
\begin{itemize}
\item[1.]
If $w\in L$ then $P_{V_w}(o=1)\ge \frac{2}{3}$.
\item[2.]
If $w\notin L$ then $P_{V_w}(o=1)\le \frac{1}{3}$.
\end{itemize}
Here,
we say that a family $V=\{V_n\}_n$ of quantum circuits is uniform
if there is a classical polynomial-time algorithm that outputs a description of $V_n$
on input $1^n$, where $n$ is the input size of $V_n$.
We denote the output bit by $o\in\{0,1\}$, and
$P_{V_w}(o=1)$ is the probability of obtaining 
$o=1$ (i.e., output 1)
if we measure the single output qubit of the circuit $V_{|w|}$ on
input $w$.
The pair of the thresholds $(\frac{1}{3},\frac{2}{3})$ is rather arbitrary.
For example, we can take
$(2^{-r(|w|)},1-2^{-r(|w|)})$ for any polynomial $r$.
\end{definition}
(We note that, for simplicity, we choose Hadamard and Toffoli gates 
as a universal gate set of quantum circuits.
This choice is crucial to obtain some of our results, while this choice is also taken
in Ref.~\cite{postBQP}, and we believe that this choice is enough to study
the essential parts of what we are interested in.
It may be possible to extend our results to other gate sets, but it would 
be a future research subject.)

The name of AWPP is thus known by many researchers including physicists.
However, the definition of AWPP seems to be somehow artificial and difficult to understand
for ones who are not familiar with GapP functions.
The purpose of the present contribution is to provide a
quantum physical interpretation of AWPP.
For the goal, we consider quantum computing with a postselection.
Here, a postselection is a (fictious) ability that we can choose
an event with probability 1 even if its probability is exponentially small.
Quantum computing with postselection was first considered by Aaronson~\cite{postBQP}.
He defined the following 
class postBQP, and showed that it is equal to PP 
(see also Ref.~\cite{Kuperberg} and Appendix~\ref{app4} for another proof of 
${\rm postBQP}={\rm PP}$):

\begin{definition}
\label{def:postBQP}
A language $L$ is in ${\rm postBQP}$ iff there exist a uniform family $V=\{V_n\}_n$
of polynomial-size quantum circuits with the ability of a postselection 
and a polynomial $u$ such that for any input $w$,
\begin{itemize}
\item[1.]
$P_{V_w}(p=1)\ge2^{-u(|w|)}$.
\item[2.]
If $w\in L$ then $P_{V_w}(o=1|p=1)\ge\frac{2}{3}$.
\item[3.]
If $w\notin L$ then $P_{V_w}(o=1|p=1)\le\frac{1}{3}$.
\end{itemize}
Here, $p\in\{0,1\}$ is the measurement result of the postselected qubit of the
circuit $V_{|w|}$, and $P_{V_w}(o=1|p=1)$ is the conditional probability that $V_{|w|}$
on input $w$ obtains $o=1$ under $p=1$.
Like ${\rm BQP}$, the pair of the thresholds $(\frac{1}{3},\frac{2}{3})$ is arbitrary. 
In particular, it can be $(2^{-r(|w|)},1-2^{-r(|w|)})$ for any polynomial $r$.
Furthermore, without loss of generality, we can assume that only a single
qubit is postselected, since postselections on more than two qubits
can be transformed to that on a single qubit by using the generalized Toffoli gate,
which can be implemented in a polynomial-size quantum circuit.
\end{definition}

We introduce a restricted version of postBQP, which we call 
${\rm postBQP}_{\rm aFP}$:
\begin{definition}
\label{def:postBQP_aFP}
A language $L$ is in ${\rm postBQP}_{\rm aFP}$ iff for any polynomials 
$r_1\ge0$ and $r_2\ge0$ there exist 
a uniform family $V=\{V_n\}_n$ of polynomial-size quantum circuits with the ability
of a postselection,
an ${\rm FP}$ function $f$,
and a polynomial $q$ such that for any input $w$,
$0<f(w)\le2^{q(|w|)}$
and
\begin{itemize}
\item[1.]
If $w\in L$ then $1-2^{-r_1(|w|)}\le P_{V_w}(o=1|p=1)\le 1$.
\item[2.]
If $w\notin L$ then $0\le P_{V_w}(o=1|p=1)\le 2^{-r_1(|w|)}$.
\item[3.]
$\Big|P_{V_w}(p=1)-\frac{f(w)}{2^{q(|w|)}}\Big|
\le 2^{-r_2(|w|)}P_{V_w}(p=1)$.
\end{itemize}
\end{definition}

The third condition intuitively means that the postselection probability $P_{V_w}(p=1)$
can be approximated to $f(w)/2^{q(|w|)}$
within the multiplicative error $2^{-r_2(|w|)}$.
(Hence the subscript ``aFP" means ``approximately FP".)
We show that ${\rm postBQP}_{\rm aFP}={\rm AWPP}$,
which provides a quantum physical interpretation of
AWPP: AWPP can be considered as an example of
postselected quantum complexity classes.
We note that while one might consider that ${\rm postBQP}_{\rm aFP}$ is also artificial
due to the fiction of postselection, we consider that this class is easier to
understand for physicists since it is defined by using the terminology of
quantum physics, or at least it gives another interpretation of AWPP,
which might be useful for future studies on AWPP.

We also introduce another restricted version of postBQP, which we call 
${\rm postBQP}_{\rm asize}$:
\begin{definition}
\label{def:postBQP_asize}
The definition of ${\rm postBQP}_{\rm asize}$ is the same as that
of ${\rm postBQP}_{\rm aFP}$
except that the ${\rm FP}$ function $f(w)$ is replaced with
$g(1^{|w|})$, where $g$ is a ${\rm GapP}$ function.
\end{definition}
We show that ${\rm postBQP}_{\rm asize}$ is equal to the classical
complexity class APP defined by Li~\cite{Li}.
Therefore, not only AWPP but also APP have quantum physical interpretations.

There are some researches
on quantum physical interpretations of classical complexity classes.
For example, the above mentioned
Aaronson's result ${\rm postBQP}={\rm PP}$~\cite{postBQP} is considered
as a quantum physical interpretation of PP.
Furthermore, Kuperberg~\cite{Kuperberg} showed that
${\rm A}_0{\rm PP}$ is equal to ${\rm SBQP}$, which is a quantum version
of SBP~\cite{BGM03},
and Fenner et al.~\cite{NQP} (see also Ref.~\cite{Yamakami}) showed that
${\rm coC}_={\rm P}$ is equal to
NQP, which is a quantum analogue of NP. 
Our contributions are in the same line of these researches,
while we take a different way for the proofs.
We not only use the relations between quantum computation and GapP functions
as used in Refs.~\cite{Fenner,NQP}, but combine them with the notion
of restricted postselection probability introduced in this paper.
Moreover, we also use tactically the property that AWPP and APP are closed under 
complement in  order to satisfy such a restriction of postselection probability.

In addition to 
${\rm postBQP}_{\rm aFP}$ and 
${\rm postBQP}_{\rm asize}$,
we introduce several restricted versions of postBQP, and study
relations among them and other complexity classes.
For example, we define a simpler version (the exact version) of ${\rm postBQP}_{\rm aFP}$,
which we call ${\rm postBQP}_{\rm FP}$:
\begin{definition}
\label{def:postBQP_FP}
A language $L$ is in ${\rm postBQP}_{\rm FP}$ iff it is in ${\rm postBQP}$
and there exist a polynomial $q$ and $f\in{\rm FP}$ $(f>0)$
such that for any input $w$,
$
P_{V_w}(p=1)=\frac{f(w)}{2^{q(|w|)}},
$
where $V$ is the uniform family of quantum circuits that assures $L\in{\rm postBQP}$.
\end{definition}
Since it is simpler than ${\rm postBQP}_{\rm aFP}$, it would be better if
we could show the equivalence of it to AWPP. Currently, we do not know
whether the equivalence holds. However,
we show that ${\rm postBQP}_{\rm FP}$ sits between WPP and AWPP.
(The definition of WPP is given in Sec.~\ref{pre}.)
It is nearly tight except showing the equivalence since WPP is one of the best
lower bounds of AWPP~\cite{toolkit} (in fact, AWPP was named as ``approximate WPP").
All our results are summarized in Fig.~\ref{suppfig}.
Definitions of new classes in the figure are given in Sec.~\ref{pre}.

A classical analogue of postBQP is ${\rm postBPP}$, which is known
to be equal to ${\rm BPP}_{\rm path}$~\cite{Han}.
We also consider a classical version, 
${\rm postBPP}_{\rm FP}$,
of
${\rm postBQP}_{\rm FP}$,
and show that 
${\rm UP}\cap{\rm coUP}\subseteq
{\rm postBPP}_{\rm FP}
\subseteq{\rm WAPP}$.
(The definitions of ${\rm postBPP}_{\rm FP}$ and WAPP are given in Sec.~\ref{pre}.)

\begin{figure}[htbp]
\begin{center}
\includegraphics[width=0.65\textwidth]{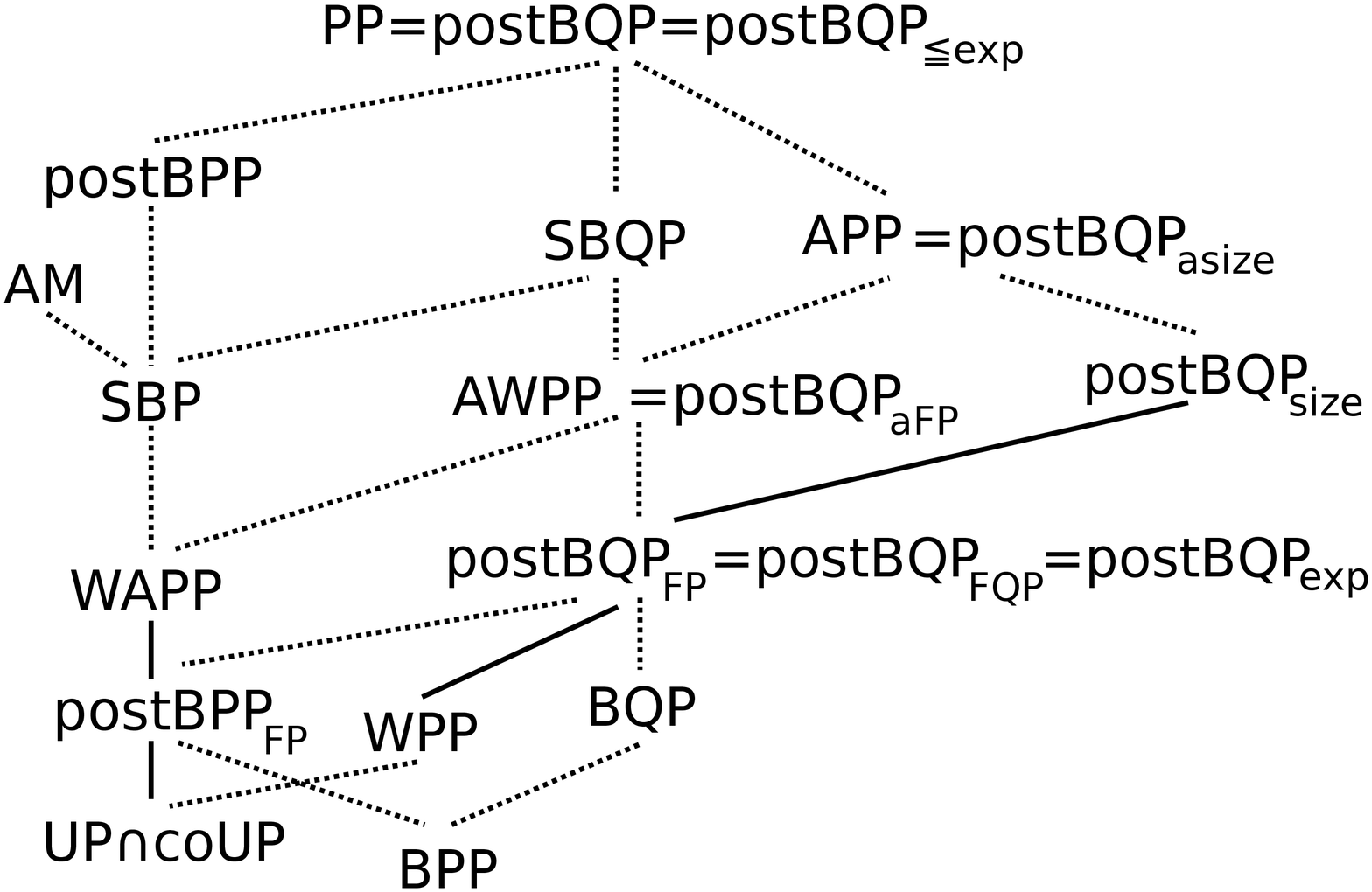}
\end{center}
\caption{
Relations among complexity classes studied in this paper.
Dotted lines are known results or inclusions followed
by definitions. Solid lines and all equalities (except
for ${\rm PP}={\rm postBQP}$) are
new results of this paper.}
\label{suppfig}
\end{figure}

\section{Preliminaries}
\label{pre}
In this section, we provide several definitions and facts
used in this paper.

\begin{definition}\cite{GapP}
\label{def:counting_machine}
A counting machine is a nondeterministic Turing machine running in polynomial
time with two halting states, accepting and rejecting, and every computation
path must end in one of these states.
Without loss of generality, we may assume each node of the computation tree
has outdegree at most two.
A counting machine is called normal if for any input each computational path has
the same number of nodes with outdegree two.
\end{definition}

\begin{definition} 
\label{def:sharpP}
A function $f:\{0,1\}^*\to {\mathbb N}\cup\{0\}$ is a ${\rm \#P}$ function
if there exists a counting 
machine $C$ such that
$f(w)$ is the number of accepting paths of $C(w)$,
where $C(w)$ denotes the
nondeterministic computation of $C$ on input $w$.
\end{definition}

\begin{definition}\cite{GapP} 
\label{def:GapP}
A function $f:\{0,1\}^*\to {\mathbb Z}$ is a ${\rm GapP}$ function
if there exists a counting machine $C$ 
such that
$f(w)$ is the number of accepting paths of $C(w)$
minus the number of rejecting paths of $C(w)$.
\end{definition}

\begin{definition}\cite{Li}
\label{def:APP}
A language $L$ is in ${\rm APP}$ iff for any polynomial $r$, there exist
$f,g\in{\rm GapP}$ such that for all $w$, $f(1^{|w|})>0$ and
\begin{itemize}
\item[1.]
If $w\in L$ then $1-2^{-r(|w|)}\le \frac{g(w)}{f(1^{|w|})}\le 1$.
\item[2.]
If $w\notin L$ then $0\le \frac{g(w)}{f(1^{|w|})}\le 2^{-r(|w|)}$.
\end{itemize}
\end{definition}

\begin{definition}\cite{BGM03}
\label{def:WAPP}
A language $L$ is in ${\rm WAPP}$ iff there exist $g\in{\rm \# P}$, a polynomial $p$,
and a constant $\epsilon>0$ such that
\begin{itemize}
\item[1.]
If $w\in L$ then 
$
\frac{1+\epsilon}{2}<\frac{g(w)}{2^{p(|w|)}}\le 1.
$
\item[2.]
If $w\notin L$ then 
$
0\le\frac{g(w)}{2^{p(|w|)}}< \frac{1-\epsilon}{2}.
$
\end{itemize}
Note that $2^{p(|w|)}$ can be replaced with an ${\rm FP}$ function $f(w)>0$.
\end{definition}

\begin{definition}\cite{GapP}
\label{def:WPP}
A language $L$ is in ${\rm WPP}$ iff there exist a ${\rm GapP}$ function $g$ 
and an ${\rm FP}$ function $f$ with $0\notin range(f)$ such that
\begin{itemize}
\item[1.]
If $w\in L$ then $g(w)=f(w)$.
\item[2.]
If $w\notin L$ then $g(w)=0$.
\end{itemize}
\end{definition}

There are relations between an output probability distribution
of a quantum circuit and a ${\rm GapP}$ function.
\begin{theorem}(Fortnow and Rogers~\cite{FR})
For any uniform family $V=\{V_n\}_n$ of polynomial-size quantum circuits,
there exist $g\in{\rm GapP}$ and a polynomial $q$ such that
for any $w$,
$
P_{V_w}(o=1)=\frac{g(w)}{2^{q(|w|)}}, 
$
where $P_{V_w}(o=1)$ is the probability that the output of
the circuit $V_{|w|}$ is $o=1$ on input $w$.
(Note that this theorem depends on the gate set.
As we have noted, in this paper, we consider the Hadamard and Toffoli gates
as a universal gate set.)
\label{PGapP}
\end{theorem}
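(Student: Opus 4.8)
The plan is to exploit the special structure of the Hadamard--Toffoli gate set: every entry of a Hadamard or a Toffoli matrix lies in $\{0,\pm 1,\pm\tfrac{1}{\sqrt 2}\}$, so a circuit built from these gates is, up to a fixed power of $2^{-1/2}$, ``integer valued'', and the relevant integers will turn out to be ${\rm GapP}$ functions. By uniformity the number of qubits $m=m(n)$ and the number of gates of $V_n$ are polynomially bounded, and by inserting pairs $HH=I$ where needed I may assume $V_n$ uses exactly $q(n)$ Hadamard gates for some fixed polynomial $q$; inserting such pairs does not change the circuit, only its symbolic normalisation. Fix $w$, put $n=|w|$, write $V_w=U_T\cdots U_1$ with each $U_t$ a single Hadamard or Toffoli on designated qubits (identity elsewhere), and let $|\bar w\rangle$ denote the initial basis state ($w$ hard-wired, ancillas $0$).

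First I would write the path-sum expansion
\[
\langle x|V_w|\bar w\rangle=\sum_{x^{(0)}=\bar w,\ x^{(T)}=x}\ \prod_{t=1}^{T}\langle x^{(t)}|U_t|x^{(t-1)}\rangle .
\]
Every nonzero factor equals $1$ (a deterministic Toffoli transition) or $\pm\tfrac{1}{\sqrt 2}$ (a Hadamard transition), and each surviving path carries exactly $q(n)$ factors of $\pm\tfrac{1}{\sqrt 2}$, with overall sign $+1$ or $-1$ according to the parity of the number of Hadamard transitions on the path that send $|1\rangle$ to $|1\rangle$. Hence $\langle x|V_w|\bar w\rangle=2^{-q(n)/2}\alpha_x(w)$ with $\alpha_x(w)\in{\mathbb Z}$ equal to the number of sign-$(+1)$ paths minus the number of sign-$(-1)$ paths from $\bar w$ to $x$. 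This $\alpha_x$ is a ${\rm GapP}$ function of the pair $(w,x)$: a nondeterministic polynomial-time machine computes the description of $V_{|w|}$, guesses the intermediate strings $x^{(1)},\dots,x^{(T-1)}$, verifies that each $\langle x^{(t)}|U_t|x^{(t-1)}\rangle$ is nonzero (killing the path otherwise), and halts accepting or rejecting according to the sign parity; its accepting-minus-rejecting count is $\alpha_x(w)$. (Equivalently, the numbers of $(\pm1)$-paths are ${\rm \#P}$ functions and ${\rm GapP}={\rm \#P}-{\rm \#P}$.)

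Then, since measuring only the output qubit gives
\[
P_{V_w}(o=1)=\sum_{x:\,x_1=1}\bigl|\langle x|V_w|\bar w\rangle\bigr|^2=\frac{1}{2^{q(n)}}\sum_{x\in\{0,1\}^m:\,x_1=1}\alpha_x(w)^2 ,
\]
and ${\rm GapP}$ is closed under multiplication and under sums over polynomially many Boolean variables, the function $g(w):=\sum_{x:\,x_1=1}\alpha_x(w)^2$ is in ${\rm GapP}$ (indeed nonnegative), and $P_{V_w}(o=1)=g(w)/2^{q(|w|)}$, as claimed.

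The only delicate point is that $q(n)$ may be odd, so $2^{-q(n)/2}$ is irrational; the remedy is to never manipulate the amplitudes as such, but to carry the factor $2^{-q(n)/2}$ symbolically and argue entirely with the integers $\alpha_x$, squaring them only at the very end where the irrational factors combine into the rational $2^{-q(n)}$. The remaining ingredients---closure of ${\rm GapP}$ under products and exponential sums, and the ability of a polynomial-time nondeterministic machine to reconstruct a uniform circuit and to simulate one computational path---are standard and need only be invoked explicitly.
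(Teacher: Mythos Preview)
Your argument is correct and is essentially the standard Fortnow--Rogers path-sum proof. Note, however, that the paper does not actually prove this theorem: it is stated with attribution to~\cite{FR} and used as a black box, so there is no ``paper's own proof'' to compare against. Your write-up supplies exactly the proof one would expect for this gate set; the padding trick to make the Hadamard count a genuine polynomial $q(n)$ is a nice touch that keeps the denominator uniform, and your handling of the sign parity via a nondeterministic machine (or equivalently as a difference of two $\#\mathrm{P}$ counts) is the right way to exhibit $\alpha_x$ as a $\mathrm{GapP}$ function. The only cosmetic point is that ``killing the path otherwise'' should, strictly speaking, be implemented by branching into one accepting and one rejecting leaf so that invalid paths contribute zero to the gap; you implicitly cover this with your $\#\mathrm{P}-\#\mathrm{P}$ remark.
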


\begin{theorem} (Fenner, Green, Homer, and Pruim~\cite{NQP})
For any $g\in{\rm GapP}$, there exist
a polynomial $s$ and
a uniform family 
$\{V_n\}_n$ 
of polynomial-size quantum circuits 
such that
$
P_{V_w}(o=1)=\frac{g(w)^2}{2^{s(|w|)}}.
$
\label{GapPP}
\end{theorem}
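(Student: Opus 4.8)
The plan is to reduce Theorem~\ref{GapPP} to the following amplitude‑construction: for a given $g\in{\rm GapP}$, build a uniform family $\{U_n\}_n$ of polynomial‑size Hadamard--Toffoli circuits on $k=k(n)$ qubits with $\langle 0^k|U_{|w|}|0^k\rangle = -g(w)/2^{t(|w|)}$ for some polynomial $t$. Given such a family, I would run $U_{|w|}$ on $|0^k\rangle$, append a fresh output qubit $o$, and compute coherently into $o$ the bit ``all $k$ register qubits equal $0$'' — an AND of negations, realizable by a polynomial‑size Toffoli network (a generalized Toffoli with ancillas, as noted after Definition~\ref{def:postBQP}). Measuring $o$ then yields $P_{V_w}(o=1)=|\langle 0^k|U_{|w|}|0^k\rangle|^2=g(w)^2/2^{2t(|w|)}$, so the theorem follows with $s=2t$. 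In this sense Theorem~\ref{GapPP} is a converse of Theorem~\ref{PGapP}, now with a squared amplitude, which is why the gate set again enters.

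To build $U_n$, I would start from a normal counting machine $C$ for $g$ and, padding with dummy binary branches, arrange that on every $w$ with $|w|=n$ the computation tree has exactly $m=m(n)$ outdegree‑two nodes, hence $2^m$ paths, say $a(w)$ accepting and $r(w)$ rejecting with $a(w)+r(w)=2^m$ and $g(w)=a(w)-r(w)$; let $\chi(x)\in\{0,1\}$ flag whether path $x\in\{0,1\}^m$ accepts. The circuit then (i) applies $H^{\otimes m}$ to an $m$‑qubit ``path register,'' giving $2^{-m/2}\sum_x|x\rangle$; (ii) reversibly simulates $C$ along path $x$ with Toffoli gates and $|0\rangle$/$|1\rangle$ ancillas, producing $2^{-m/2}\sum_x|x\rangle|\chi(x)\rangle|{\rm junk}(x)\rangle$; (iii) applies $Z=HXH$ to the accept/reject flag qubit, giving $2^{-m/2}\sum_x(-1)^{\chi(x)}|x\rangle|\chi(x)\rangle|{\rm junk}(x)\rangle$; (iv) runs step (ii) backwards, restoring the flag and garbage to $|0\cdots0\rangle$, so the state is $2^{-m/2}\sum_x(-1)^{\chi(x)}|x\rangle|0\cdots0\rangle$; (v) applies $H^{\otimes m}$ to the path register again, so that the amplitude of $|0^m\rangle|0\cdots0\rangle$ is $2^{-m}\sum_x(-1)^{\chi(x)}=2^{-m}\bigl(r(w)-a(w)\bigr)=-g(w)/2^m$. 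Uniformity is immediate since $C$ and its branch count are polynomial‑time computable from $1^n$, and all gates are Hadamard and Toffoli (with $|0\rangle$/$|1\rangle$ ancillas).

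The delicate points, rather than the gate set, are the normalization and the uncomputation. First, $m$ must depend only on $|w|$ and not on $w$, which is the standard ${\rm GapP}$ normal form (a fixed number of nondeterministic bits per input length); otherwise the exponent $s$ in Theorem~\ref{GapPP} would fail to be a function of $|w|$ alone. Second, and this is the step where a careless argument breaks, the garbage $|{\rm junk}(x)\rangle$ from step (ii) must be erased \emph{exactly} in step (iv): any residual $x$‑dependence in the ancillas destroys the coherent summation of the signs $(-1)^{\chi(x)}$ onto $|0^m\rangle$ in step (v), and the amplitude identity fails. The restriction to $\{H,\text{Toffoli}\}$ is harmless here, because a reversible simulation of a deterministic polynomial‑time computation requires only Toffoli gates together with $|0\rangle$/$|1\rangle$ ancillas, and the single needed $Z$ is $HXH$.
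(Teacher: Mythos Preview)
Your construction is correct and is essentially the standard Fenner--Green--Homer--Pruim argument: superpose over the $2^m$ paths of a normal counting machine, phase-kick by the accept/reject bit, uncompute, and interfere with a second layer of Hadamards so that the $|0^m\rangle$ amplitude collects $2^{-m}\sum_x(-1)^{\chi(x)}=-g(w)/2^m$; then flag the all-zeros outcome into a fresh output qubit and measure. Your caveats about the normal form (so that $m$, hence $s=2m$, depends only on $|w|$) and about exact uncomputation are exactly the two places where a sloppy write-up fails, and your handling of the $\{H,\text{Toffoli}\}$ gate set (including $Z=HXH$) is fine.

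There is nothing in the paper to compare against: Theorem~\ref{GapPP} is quoted from~\cite{NQP} and not reproved here. That said, the paper does implicitly rely on precisely your construction elsewhere---see the proof of Theorem~\ref{main} in Sec.~\ref{proof1} and of Theorem~\ref{WPPinpostBQP_FP} in Appendix~\ref{app2}, where states of the form $2^{-(q+1)/2}\sum_x(-1)^{N^j(w,x)}|x\rangle|N^j(w,x)\rangle|\cdot\rangle$ are asserted to be preparable by polynomial-size circuits and then projected onto $|+\rangle^{\otimes(q+1)}$. That projection is the Hadamard-basis version of your step~(v); the paper simply skips the uncompute step because it keeps the flag qubit explicitly in the ket and postselects on it as well. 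So your write-up is, if anything, slightly cleaner than what the paper uses in passing, and would serve as a self-contained proof of the cited theorem.
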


Now we introduce the restricted postBQP classes other than those
introduced in the previous section.
(Here, $V$ is the uniform family of polynomial-size quantum circuits
that assures $L\in{\rm postBQP}$ as in Definition~\ref{def:postBQP_FP}.)

\begin{definition}
\label{def:postBQP_size}
A language $L$ is in ${\rm postBQP}_{\rm size}$ iff it is in ${\rm postBQP}$ and
$P_{V_w}(p=1)$ depends only on $|w|$.
\end{definition}
From Theorem~\ref{PGapP}, it is an exact version of ${\rm postBQP}_{\rm asize}$.

\begin{definition} 
\label{def:postBQP_leexp}
A language $L$ is in ${\rm postBQP}_{\le{\rm exp}}$
iff it is in ${\rm postBQP}$ and there exists a polynomial $q>0$ such that
for any input $w$,
$
P_{V_w}(p=1)\le2^{-q(|w|)}.
$
\end{definition}

\begin{definition} 
\label{def:postBQP_exp}
A language $L$ is in ${\rm postBQP}_{\rm exp}$ iff
it is in ${\rm postBQP}$ and there exists a polynomial $q$ such
that for any input $w$,
$
P_{V_w}(p=1)=\frac{1}{2^{q(|w|)}}.
$
\end{definition}

\begin{definition} 
\label{def:postBQP_FQP}
A language $L$ is in ${\rm postBQP}_{\rm FQP}$ iff
it is in ${\rm postBQP}$ and there exist a polynomial $q$ and a function 
$f:\{0,1\}^*\to{\mathbb N}$, which can be calculated~\footnote{
We assume that $f$ can be calculated without any error.
} by a uniform family
of polynomial-size quantum circuits, such that for any input $w$, 
$
P_{V_w}(p=1)=\frac{f(w)}{2^{q(|w|)}}.
$
\end{definition}

We also consider the classical analogue of ${\rm postBQP}_{\rm FP}$.
\begin{definition}
\label{postBPP_FP}
We consider the following polynomial-time probabilistic Turing machine.
\begin{itemize}
\item[1.]
At every nondeterministic step, it makes a random decision between two possibilities, 
and each possibility is chosen with probability 1/2.
\item[2.]
The number of random decisions is the same for all computation paths.
\end{itemize}
Therefore, if the machine halts after $t$ nondeterministic steps, the probability of obtaining
a specific computation path is $2^{-t}$. 

A language $L$ is in ${\rm postBPP}_{\rm FP}$ iff there exist
a polynomial-time probabilistic Turing machine $V$ that satisfies the above properties
and outputs two bits $p$ and $o$,
an ${\rm FP}$ function $f>0$, a polynomial $q$, and a constant $\epsilon>0$ such that
\begin{itemize}
\item[1.]
$
P_{V_w}(p=1)=\frac{f(w)}{2^{q(|w|)}}.
$
\item[2.]
If $w\in L$ then 
$
\frac{1+\epsilon}{2}\le P_{V_w}(o=1|p=1)\le 1.
$
\item[3.]
If $w\notin L$ then 
$
0\le P_{V_w}(o=1|p=1)\le \frac{1-\epsilon}{2}.
$
\end{itemize}
Here, $P_{V_w}(p=1)$ and $P_{V_w}(o=1|p=1)$ are defined similarly to the case where
$V$ is a uniform family of circuits.
\end{definition}

\section{Results}

The main result of the present contribution is the following
quantum interpretation of AWPP:
\begin{theorem}
${\rm AWPP}={\rm postBQP}_{\rm aFP}$.
\label{main}
\end{theorem}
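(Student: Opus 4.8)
The plan is to prove the two inclusions separately; in each direction the only real work is converting the quantitative gap guaranteed by one definition into that required by the other, exploiting the freedom to choose the error polynomials.

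\emph{For ${\rm AWPP}\subseteq{\rm postBQP}_{\rm aFP}$.} Fix the target polynomials $r_1,r_2$ and set $r:=r_1+r_2+c$ for a suitable constant $c$. By Definition~\ref{def:AWPP}, with the ${\rm FP}$ function taken to be $2^{q(|w|)}$, there are $g\in{\rm GapP}$ and a polynomial $q$ with $0\le g(w)\le 2^{q(|w|)}$ and $g(w)/2^{q(|w|)}\in[1-2^{-r(|w|)},1]$ for $w\in L$ and $\in[0,2^{-r(|w|)}]$ for $w\notin L$. The key observation is that $\bar g:=2^{q}-g$ is again a ${\rm GapP}$ function (morally, this is the closure of ${\rm AWPP}$ under complement), and that $\bar g/2^{q}$ is small exactly when $g/2^{q}$ is close to $1$, and vice versa. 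I would then apply Theorem~\ref{GapPP} to both $g$ and $\bar g$ to obtain polynomial-size circuits $W_1,W_2$ with $P_{W_1}(o{=}1)=g^2/2^{s}$ and $P_{W_2}(o{=}1)=\bar g^2/2^{s}$ (padding to a common $s$ with $s\ge 2q-1$), and assemble the postselecting circuit $V$ as follows: prepare an extra qubit $c$ as $\tfrac1{\sqrt2}(|0\rangle+|1\rangle)$, run $W_1$ and $W_2$ on disjoint fresh registers with designated bits $a_1,a_2$, and set the postselection bit $p:=(c{=}0\wedge a_1{=}1)\vee(c{=}1\wedge a_2{=}1)$ and the output bit $o:=\neg c$ (all expressible with Hadamard and Toffoli gates). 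Since $c$ and the two work registers are mutually independent, $P_{V_w}(p{=}1)=(g^2+\bar g^2)/2^{s+1}$ and $P_{V_w}(o{=}1\wedge p{=}1)=g^2/2^{s+1}$, so $P_{V_w}(o{=}1\,|\,p{=}1)=g^2/(g^2+\bar g^2)$, which is within $2^{-r_1}$ of $1$ (resp.\ of $0$) by the case analysis; moreover $g^2+\bar g^2=2^{2q}\bigl(1-2\epsilon(1-\epsilon)\bigr)$ with $\epsilon\le 2^{-r}$, so $P_{V_w}(p{=}1)$ differs from $2^{2q}/2^{s+1}$ by multiplicative error at most $2^{-r_2}$. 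Taking $f(w):=2^{2q(|w|)}$ (an ${\rm FP}$ function, $\le 2^{s+1}$ after padding) discharges Definition~\ref{def:postBQP_aFP}, and $P_{V_w}(p{=}1)$ is bounded below by $2^{-\mathrm{poly}}$ automatically since $g+\bar g=2^q$.

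\emph{For ${\rm postBQP}_{\rm aFP}\subseteq{\rm AWPP}$.} Fix the target polynomial $r$ and invoke Definition~\ref{def:postBQP_aFP} with $r_1:=r+1$ and $r_2:=r+4$, obtaining a circuit family $V$, an ${\rm FP}$ function $f$ with $0<f(w)\le 2^{q(|w|)}$, and a polynomial $q$. By Theorem~\ref{PGapP} (which, via the standard closure of ${\rm GapP}$ under products and under summation over polynomially many bits, extends to the probability of any fixed outcome pattern on a designated set of output qubits) one can write $P_{V_w}(p{=}1)=G_0(w)/2^{N(|w|)}$ and $P_{V_w}(o{=}1\wedge p{=}1)=G_1(w)/2^{N(|w|)}$ with $G_0,G_1\in{\rm GapP}$ nonnegative and $0\le G_1\le G_0$, and one may pad $N$ so that $N(|w|)\ge q(|w|)+r_2(|w|)$. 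I would then set $G:=G_1$ and, crucially, replace the ``hard'' normalization $G_0$ by the genuine ${\rm FP}$ function $F(w):=f(w)\,2^{N-q}+f(w)\,2^{N-q-r_2+2}$ (an integer by the choice of $N$, and positive since $f\ge 1$). Condition 3 of Definition~\ref{def:postBQP_aFP} gives $f/2^{q}\le P_{V_w}(p{=}1)/(1-2^{-r_2})$, hence $G_0\le f\,2^{N-q}(1+2^{-r_2+1})\le F$, so that $G/F\le G_1/G_0=P_{V_w}(o{=}1\,|\,p{=}1)\le 1$; and $F\le f\,2^{N-q}(1+2^{-r_2+2})$ together with $f/2^q\ge P_{V_w}(p{=}1)/(1+2^{-r_2})$ gives $G/F\ge P_{V_w}(o{=}1\,|\,p{=}1)\,(1-2^{-r_2+3})$. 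The case analysis on $P_{V_w}(o{=}1\,|\,p{=}1)$, together with the choices $r_1=r+1$ and $r_2=r+4$, then yields precisely $1-2^{-r}\le G/F\le 1$ for $w\in L$ and $0\le G/F\le 2^{-r}$ for $w\notin L$, as required by Definition~\ref{def:AWPP}.

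I expect the main obstacle to be in the second inclusion: in an arbitrary ${\rm postBQP}$ computation the acceptance probability conditioned on postselection is a ratio of two ${\rm GapP}$-like quantities, whereas ${\rm AWPP}$ demands a \emph{genuine} ${\rm FP}$ denominator dominating the numerator's normalization — supplying exactly such a function is the whole point of the ``aFP'' restriction. The delicate point is that simply substituting the ${\rm FP}$ approximation $f$ for the true postselection normalization $G_0$ can push the ratio slightly above $1$, so one must pad $f$ upward by a carefully calibrated amount — large enough to dominate $G_0$, small enough not to spoil the lower bound when $w\in L$ — and then absorb all resulting multiplicative errors by taking $r_1,r_2$ large in terms of $r$. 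Everything else (free error amplification from the robustness of the thresholds noted after Definitions~\ref{def:AWPP} and~\ref{def:postBQP_aFP}, the harmless leftover garbage registers, and the reduction to a single postselected qubit) is routine given Theorems~\ref{PGapP} and~\ref{GapPP}.
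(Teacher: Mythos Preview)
Your proposal is correct and follows essentially the same strategy as the paper: for ${\rm AWPP}\subseteq{\rm postBQP}_{\rm aFP}$ you exploit closure under complement (your $\bar g=2^{q}-g$ is exactly the paper's use of ${\rm AWPP}\cap{\rm coAWPP}$ specialized to $f=2^{q}$) and then build a postselecting circuit whose conditional acceptance is $g^{2}/(g^{2}+\bar g^{2})$ while the postselection probability is $(g^{2}+\bar g^{2})/2^{s+1}\approx 2^{2q}/2^{s+1}$; for ${\rm postBQP}_{\rm aFP}\subseteq{\rm AWPP}$ you write the joint probability via Theorem~\ref{PGapP} and replace the true normalization $G_{0}$ by an ${\rm FP}$ quantity derived from $f$, absorbing the multiplicative slack into the choice of $r_{1},r_{2}$. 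The only cosmetic differences are that the paper assembles its circuit explicitly from normal counting machines (postselecting on $|+\rangle$ states) rather than invoking Theorem~\ref{GapPP} as a black box, and in the reverse direction the paper scales the ${\rm GapP}$ numerator by $(2^{r_{2}}-1)$ and lands in the $(1/3,2/3)$ formulation of ${\rm AWPP}$, whereas you pad the ${\rm FP}$ denominator additively and land in the exponential-error formulation---both manoeuvres serve the same purpose of keeping the ratio in $[0,1]$ after substituting the ${\rm FP}$ approximation.
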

The proof is given in Sec.~\ref{proof1}.

By replacing some FP functions in the proof with GapP functions,
we can also show the following quantum interpretation of APP:
\begin{theorem}
\label{main2}
${\rm APP}={\rm postBQP}_{\rm asize}$.
\end{theorem}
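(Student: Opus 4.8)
The plan is to run the proof of Theorem~\ref{main} essentially unchanged, with every ${\rm FP}$ function $f(w)$ that pins down the postselection probability replaced throughout by a ${\rm GapP}$ function evaluated on $1^{|w|}$, and then to check that this substitution is harmless. It costs nothing because the bookkeeping in the proof of Theorem~\ref{main} uses only that ${\rm FP}\subseteq{\rm GapP}$, that $2^{q(|w|)}$ times the relevant denominator is again of the same type, and that the denominator is positive; all three are preserved, since ${\rm GapP}$ is closed under multiplication by $2^{q(|w|)}$ and under the affine rescalings used to normalize error bounds, and $2^{q(|w|)}\cdot g(1^{|w|})$ still depends only on $|w|$. (Also $w\mapsto g(1^{|w|})$ is itself a ${\rm GapP}$ function.)

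\textbf{${\rm postBQP}_{\rm asize}\subseteq{\rm APP}$.} Given $L\in{\rm postBQP}_{\rm asize}$, fix the witnessing circuit family $V$, the ${\rm GapP}$ function $g$, and the polynomial $q$ of condition~3 of Definition~\ref{def:postBQP_aFP} (with $f(w)$ read as $g(1^{|w|})$). Applying Theorem~\ref{PGapP} to the circuit that measures $o\wedge p$ yields $h\in{\rm GapP}$ with $P_{V_w}(o=1\wedge p=1)=h(w)/2^{q}$, after rescaling $h$ and $g$ by suitable powers of two so the denominators coincide. Then $P_{V_w}(o=1\mid p=1)=h(w)/\big(2^{q}P_{V_w}(p=1)\big)$, and condition~3 forces $2^{q}P_{V_w}(p=1)$ to agree with $g(1^{|w|})$ up to a factor $1\pm 2^{-r_2(|w|)}$ (in particular $g(1^{|w|})>0$, using also $P_{V_w}(p=1)>0$); hence $h(w)/g(1^{|w|})$ differs from $P_{V_w}(o=1\mid p=1)$ by at most a factor $1\pm 2^{-r_2(|w|)+1}$. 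Taking $h$ as numerator and the length-only function $g(1^{|w|})$ as denominator, choosing $r_1,r_2$ large in terms of the target ${\rm APP}$ error, and performing a standard affine normalization to force the ratio into $[0,1]$ (which only multiplies $g$ by a power of two and keeps it ${\rm GapP}$-on-$1^{|w|}$), we obtain an ${\rm APP}$ representation of $L$; robustness of ${\rm APP}$ error bounds (Li~\cite{Li}) then gives all polynomials $r$.

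\textbf{${\rm APP}\subseteq{\rm postBQP}_{\rm asize}$.} Given $L\in{\rm APP}$ via $f,g\in{\rm GapP}$ with $f(1^{|w|})>0$ and $0\le g(w)\le f(1^{|w|})$, use that ${\rm APP}$ is closed under complement \emph{with the same denominator}: $(f,\,f-g)$ witnesses $L^{c}$, since $f(1^{|w|})-g(w)\ge0$ and $g/f+(f-g)/f=1$. This is precisely what makes the postselection-probability restriction satisfiable. Following the construction in the proof of Theorem~\ref{main} --- which linearizes ${\rm GapP}$ values into quantum amplitudes (via Theorem~\ref{GapPP}, equivalently a Hadamard test applied to the counting machines for $g$ and for $f-g$) and then uses postselection to renormalize --- one builds a uniform polynomial-size quantum circuit (over the Hadamard and Toffoli gate set) with a postselection qubit $p$ and an output qubit $o$ such that, for a polynomial $Q$ with $f(1^{|w|})\le 2^{Q}$, $P_{V_w}(p=1\wedge o=1)=g(w)/2^{Q}$ and $P_{V_w}(p=1\wedge o=0)=(f(1^{|w|})-g(w))/2^{Q}$. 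Then $P_{V_w}(p=1)=f(1^{|w|})/2^{Q}$ depends only on $|w|$ and already has the \emph{exact} form demanded by condition~3 (with ${\rm GapP}$ function $f$ and polynomial $Q$), so the restriction holds for every $r_2$; moreover $P_{V_w}(o=1\mid p=1)=g(w)/f(1^{|w|})$, which is $\ge 1-2^{-r_1(|w|)}$ or $\le 2^{-r_1(|w|)}$ as soon as the ${\rm APP}$ representation is taken with error $2^{-r_1(|w|)}$ (again by robustness), and $P_{V_w}(p=1)\ge 2^{-Q}$ because $f(1^{|w|})\ge1$. Hence $L\in{\rm postBQP}_{\rm asize}$.

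\textbf{Main obstacle.} The delicate point is the inclusion ${\rm APP}\subseteq{\rm postBQP}_{\rm asize}$: Theorem~\ref{GapPP} directly produces \emph{squared} ${\rm GapP}$ amplitudes, whereas we need the joint probabilities $P_{V_w}(p=1\wedge o=b)$ to equal prescribed nonnegative ${\rm GapP}$ values over $2^{Q}$ \emph{and} we need $P_{V_w}(p=1)$ to depend only on $|w|$. Both demands are met by linearizing the ${\rm GapP}$ values via a Hadamard test and then using the complement representation $(f,f-g)$ so that the two accepted sub-branches sum to the length-only quantity $f(1^{|w|})/2^{Q}$; checking that this is implemented by a uniform polynomial-size circuit over the Hadamard and Toffoli gates, and that all error and normalization estimates survive the replacement of the ${\rm FP}$ denominator by a ${\rm GapP}$-on-$1^{|w|}$ denominator, is the technical heart --- but it runs exactly parallel to the ${\rm FP}$ case treated in the proof of Theorem~\ref{main}.
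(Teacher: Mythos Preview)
Your inclusion ${\rm postBQP}_{\rm asize}\subseteq{\rm APP}$ is essentially the paper's argument with the denominator switched from ${\rm FP}$ to ${\rm GapP}$ evaluated on $1^{|w|}$, and it goes through.

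The gap is in ${\rm APP}\subseteq{\rm postBQP}_{\rm asize}$. You claim that the construction from the proof of Theorem~\ref{main} produces a circuit with $P_{V_w}(p=1\wedge o=1)=g(w)/2^{Q}$ and $P_{V_w}(p=1\wedge o=0)=\bigl(f(1^{|w|})-g(w)\bigr)/2^{Q}$, i.e., joint probabilities \emph{linear} in the ${\rm GapP}$ values, and hence $P_{V_w}(p=1)=f(1^{|w|})/2^{Q}$ exactly. That is not what the construction does. It places the ${\rm GapP}$ values into \emph{amplitudes}: after postselection the unnormalized output state is proportional to $h_1|1\rangle+h_2|0\rangle$, so the joint probabilities are proportional to $h_1^2$ and $h_2^2$. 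With your choice $h_1=g$ and $h_2=f-g$ this gives
\[
P_{V_w}(p=1)\ \propto\ g(w)^2+\bigl(f(1^{|w|})-g(w)\bigr)^2
\ =\ f(1^{|w|})^2\Bigl(\bigl(\tfrac{g}{f}\bigr)^2+\bigl(1-\tfrac{g}{f}\bigr)^2\Bigr),
\]
and $g(w)/f(1^{|w|})$ varies with $w$ (within a window of width $2^{-r}$), so $P_{V_w}(p=1)$ is \emph{not} a function of $|w|$ alone. Nor does a Hadamard test rescue the linear claim: with the phase unitary $U=\sum_x(-1)^{N(w,x)}|x\rangle\langle x|$ and $|\psi\rangle=|+\rangle^{\otimes q}$ one gets $P(0)=N_a(w)/2^{q}$, a $\#{\rm P}$ count with an additive offset from the ${\rm GapP}$ value, and there is no known way to realize an arbitrary nonnegative ${\rm GapP}$ value directly as a probability over $2^{Q}$. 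So the ``exact'' condition~3 you assert is unjustified; what one actually obtains is condition~3 up to a multiplicative $1\pm O(2^{-r})$ factor, and you must carry out those estimates (as the paper does) rather than claim exactness. Once you accept the squaring, your same-denominator complement $(f,\,f-g)$ is a pleasant simplification of the paper's route, which instead invokes ${\rm APP}={\rm coAPP}$ to get separate pairs $(f_1,g_1)$, $(f_2,g_2)$ and works with $h_1=g_1f_2$, $h_2=g_2f_1$; both paths land on $P_{V_w}(p=1)$ equal to a length-only ${\rm GapP}$ quantity up to a $(1\pm O(2^{-r}))$ factor, which is exactly what ${\rm postBQP}_{\rm asize}$ demands.
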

The proof is given in Appendix~\ref{app1}.

If we consider not the approximate version, ${\rm postBQP}_{\rm aFP}$,
but the exact version, ${\rm postBQP}_{\rm FP}$, we do not know whether
it is equal to AWPP. Since 
${\rm postBQP}_{\rm FP}\subseteq{\rm postBQP}_{\rm aFP}$,
we know
${\rm postBQP}_{\rm FP}\subseteq{\rm AWPP}$.
Furthermore, we can show the following nearly tight lowerbound:
\begin{theorem}
${\rm WPP}\subseteq{\rm postBQP}_{\rm FP}$.
\label{WPPinpostBQP_FP}
\end{theorem}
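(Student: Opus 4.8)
The plan is to convert a ${\rm WPP}$-representation of $L$ into a postselecting quantum circuit whose postselection probability is, \emph{by construction}, an ${\rm FP}$ function divided by a power of two. Fix $g\in{\rm GapP}$ and $f\in{\rm FP}$ with $f>0$ witnessing $L\in{\rm WPP}$, so that $g(w)=f(w)$ when $w\in L$ and $g(w)=0$ when $w\notin L$. The elementary observation driving the whole argument is that, setting $g_0:=f-g$, one has $g(w)^2+g_0(w)^2=f(w)^2$ for \emph{every} $w$ (one of the two summands equals $f(w)^2$ and the other equals $0$), while $g(w)^2/f(w)^2\in\{0,1\}$ and equals $1$ exactly when $w\in L$. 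Note that $g_0$ is a ${\rm GapP}$ function (an ${\rm FP}$ function with $f>0$ is ${\rm GapP}$, and ${\rm GapP}$ is closed under subtraction) and that $f^2$ is an ${\rm FP}$ function with $f^2>0$.

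Next I would apply Theorem~\ref{GapPP} to $g$ and to $g_0$ to obtain uniform polynomial-size families $V^{(1)}$ and $V^{(0)}$; after padding (replace the output $o$ by the AND of $o$ with a fresh Hadamard-rotated qubit, which increases the exponent by one) one gets a common polynomial $s$ with $P_{V^{(1)}_w}(o=1)=g(w)^2/2^{s(|w|)}$ and $P_{V^{(0)}_w}(o=1)=g_0(w)^2/2^{s(|w|)}$. The circuit $V_{|w|}$ then, on input $w$, runs $V^{(1)}_{|w|}$ and $V^{(0)}_{|w|}$ on two disjoint registers with output qubits $o_1$ and $o_0$, creates a uniform bit $c$ by a Hadamard on a fresh qubit, and (by a polynomial-size reversible sub-circuit of Toffoli gates) computes the postselected qubit $p:=(c\wedge o_1)\vee(\neg c\wedge o_0)$ and sets the output qubit $o:=c$. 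Since the two sub-circuits and the coin are independent, $P_{V_w}(p=1)=\tfrac12 g(w)^2/2^{s}+\tfrac12 g_0(w)^2/2^{s}=f(w)^2/2^{\,s(|w|)+1}$ and $P_{V_w}(o=1\wedge p=1)=\tfrac12 g(w)^2/2^{s}$, so $P_{V_w}(o=1\mid p=1)=g(w)^2/f(w)^2$, which is $1$ on $L$ and $0$ off $L$. As $f\ge1$, $P_{V_w}(p=1)\ge 2^{-(s(|w|)+1)}$, so $L\in{\rm postBQP}$ with witness $V$; and $P_{V_w}(p=1)=f(w)^2/2^{q(|w|)}$ with $f^2\in{\rm FP}$, $f^2>0$, and $q(n):=s(n)+1$, which is exactly the additional clause in Definition~\ref{def:postBQP_FP}. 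Hence $L\in{\rm postBQP}_{\rm FP}$.

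The only genuinely delicate point will be checking that the single postselected qubit $p$ and the output qubit $o$ can be produced coherently without perturbing the relevant probabilities; since $o_0,o_1,c$ arise from a product state this reduces to the observation that measuring $(c,p)$ gives the same joint distribution whether $p$ is computed reversibly and then measured or $(c,o_0,o_1)$ are measured and $p$ is computed classically. The remaining items---equalizing the two exponents coming from Theorem~\ref{GapPP} by padding, and verifying $f-g\in{\rm GapP}$ and $f^2\in{\rm FP}$---are routine. (Instead of invoking Theorem~\ref{GapPP} as a black box one could open it up and directly prepare a designated basis state with amplitude $g(w)/2^{a(|w|)}$, but the above is cleaner.)
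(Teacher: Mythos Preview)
Your argument is correct but differs from the paper's in two notable ways. First, the paper invokes the closure ${\rm WPP}={\rm coWPP}$ to obtain \emph{two} ${\rm GapP}$ functions $g_1,g_2$ and two ${\rm FP}$ functions $f_1,f_2$, then symmetrizes to $g_1'=g_1f_2$ and $g_2'=g_2f_1$ so that exactly one of $g_1',g_2'$ equals $f_1f_2$ and the other vanishes; you bypass this entirely by observing that the single ${\rm WPP}$ witness already supplies the complementary function $g_0:=f-g$, giving the same identity $g^2+g_0^2=f^2$ without ever appealing to closure under complement. Second, the paper builds the quantum circuit explicitly from normal counting machines for $g_1',g_2'$ (preparing a superposition over computation paths and postselecting on $|+\rangle^{\otimes q+1}$), whereas you invoke Theorem~\ref{GapPP} as a black box twice and combine the two resulting circuits with a fair coin. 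Your route is shorter and more modular; the paper's is more self-contained but essentially re-derives the content of Theorem~\ref{GapPP} inside the proof.

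One small point: the definition of ${\rm WPP}$ in the paper only requires $0\notin\mathrm{range}(f)$, not $f>0$. You silently assume $f>0$; this is harmless (replace $(f,g)$ by $(f^2,fg)$, or by $(|f|,\mathrm{sign}(f)\cdot g)$), but it is worth a one-line remark.
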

The proof is given in Appendix~\ref{app2}.

We can also show several relations among
restricted postBQP classes:

\begin{theorem}
\label{postBQPequalpostBQP_leexp}
${\rm postBQP}={\rm postBQP}_{\le {\rm exp}}$.
\end{theorem}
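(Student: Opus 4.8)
The inclusion ${\rm postBQP}_{\le{\rm exp}}\subseteq{\rm postBQP}$ is immediate from Definition~\ref{def:postBQP_leexp}, so the plan is to establish ${\rm postBQP}\subseteq{\rm postBQP}_{\le{\rm exp}}$. Let $L\in{\rm postBQP}$ be witnessed by a uniform family $V=\{V_n\}_n$ and a polynomial $u$ as in Definition~\ref{def:postBQP}. The idea is simply to \emph{dilute} the postselection probability by an inverse-exponential factor while leaving the postselected output statistics untouched. Concretely, I would define a new family $V'=\{V'_n\}_n$ that, on input $w$, first runs $V_{|w|}$, then appends $m:=|w|+1$ fresh ancilla qubits initialised to $|0\rangle$, applies a Hadamard gate to each, and postselects all $m$ of these qubits on the outcome $1$ in addition to the original postselected qubit $p$. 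By the remark after Definition~\ref{def:postBQP}, the resulting postselection on $m+1$ qubits can be replaced by a postselection on a single qubit via a polynomial-size generalised Toffoli gate computing their AND; thus $V'$ is a legitimate, uniform, polynomial-size postBQP family.

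The one computation to carry out is that the $m$ new qubits form a tensor factor unentangled with the register of $V_{|w|}$, so the new postselection event $p'=1$ factorises: $P_{V'_w}(p'=1)=2^{-m}\,P_{V_w}(p=1)$ and $P_{V'_w}(o=1\wedge p'=1)=2^{-m}\,P_{V_w}(o=1\wedge p=1)$. Dividing, $P_{V'_w}(o=1\mid p'=1)=P_{V_w}(o=1\mid p=1)$, so the completeness and soundness conditions of Definition~\ref{def:postBQP} are inherited by $V'$ verbatim. Moreover $P_{V'_w}(p'=1)\le 2^{-m}= 2^{-(|w|+1)}$, which is exactly the bound $P_{V'_w}(p'=1)\le 2^{-q(|w|)}$ required in Definition~\ref{def:postBQP_leexp} with the positive polynomial $q(n)=n+1$; and $P_{V'_w}(p'=1)\ge 2^{-m}\cdot 2^{-u(|w|)}=2^{-(u(|w|)+|w|+1)}$ shows $V'$ still lies in ${\rm postBQP}$. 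Hence $V'$ witnesses $L\in{\rm postBQP}_{\le{\rm exp}}$.

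There is essentially no hard step here: the only things needing (routine) attention are verifying that the diluted family still certifies membership in ${\rm postBQP}$ — i.e.\ that its postselection probability stays bounded below by an inverse exponential — and appealing to the single-qubit-postselection convention to remain within the letter of Definition~\ref{def:postBQP}, both of which follow at once from the factorisation above.
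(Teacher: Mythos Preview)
Your proof is correct and is essentially the same as the paper's: both dilute the postselection probability by an independent inverse-exponential factor (you via $m$ Hadamard ancillas postselected on $1$, the paper via an auxiliary bit $b$ that is $1$ with probability $2^{-q(|w|)}$), so that the conditional output distribution is preserved while $P(p'=1)\le 2^{-q(|w|)}$. The only cosmetic difference is that you fix $q(n)=n+1$ and spell out the single-qubit reduction via a generalised Toffoli, whereas the paper leaves $q$ arbitrary and the bit $b$ abstract.
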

The proof is given in Appendix~\ref{app3}.

\begin{theorem}
${\rm postBQP}_{\rm FP}={\rm postBQP}_{\rm FQP}={\rm postBQP}_{\rm exp}\subseteq
{\rm postBQP}_{\rm size}$.
\label{equivalences}
\end{theorem}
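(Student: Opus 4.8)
The plan is to prove the cycle ${\rm postBQP}_{\rm exp}\subseteq{\rm postBQP}_{\rm FP}\subseteq{\rm postBQP}_{\rm FQP}\subseteq{\rm postBQP}_{\rm exp}$, which collapses the three classes, and then to note ${\rm postBQP}_{\rm exp}\subseteq{\rm postBQP}_{\rm size}$ separately. Three of these inclusions are immediate and use the \emph{same} witnessing family $V$: $1/2^{q(|w|)}$ equals $f(w)/2^{q(|w|)}$ for the constant function $f\equiv1\in{\rm FP}$, giving ${\rm postBQP}_{\rm exp}\subseteq{\rm postBQP}_{\rm FP}$; an ${\rm FP}$ function is computed exactly by a uniform family of polynomial-size reversible, hence quantum, circuits, so an ${\rm FP}$-shaped postselection probability is in particular ${\rm FQP}$-shaped, giving ${\rm postBQP}_{\rm FP}\subseteq{\rm postBQP}_{\rm FQP}$; and $1/2^{q(|w|)}$ depends only on $|w|$, giving ${\rm postBQP}_{\rm exp}\subseteq{\rm postBQP}_{\rm size}$. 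Everything thus reduces to the one real step, ${\rm postBQP}_{\rm FQP}\subseteq{\rm postBQP}_{\rm exp}$.

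For that step, let $V$ witness $L\in{\rm postBQP}_{\rm FQP}$ with $P_{V_w}(p=1)=f(w)/2^{q(|w|)}$, $f$ computed without error by a polynomial-size quantum circuit family. First I would do two harmless preprocessings. (i) Amplify the output gap: run several independent copies of $V$, postselect all of their postselection qubits, and output the majority of their output qubits, so that $P_{V_w}(o=1|p=1)\ge 0.9$ or $\le 0.1$ according to membership; the postselection probability becomes $f(w)^{s}/2^{sq(|w|)}$, still ${\rm FQP}$-shaped since powers of a quantumly computable function are quantumly computable. (ii) Append a few fresh $|{+}\rangle$ qubits that are also postselected to $|0\rangle$, which multiplies the postselection probability by a fixed inverse power of two and forces the numerator to be strictly less than the denominator. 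Writing the resulting quantities as $N=2^{q(|w|)}$ and $m=f(w)=2^{v}m_{\rm odd}$ with $m_{\rm odd}$ odd, we have $1\le m<N$. The new machine $V'$ reversibly computes $m$, then computes from $m$ the two integers $b=2^{k-t}\bmod m_{\rm odd}$ and $a=b+2^{q(|w|)-v}\lfloor 2^{k-t}/m_{\rm odd}\rfloor$ for polynomials $t,k$ with $k-t$ large (for instance $t=q(|w|)+1$, $k=3q(|w|)+1$); a routine check gives polynomial-time computability and $0\le b<a\le 2^{k}$. Finally $V'$ runs $V$ coherently to obtain its qubits $p,o$, prepares a fresh $k$-qubit uniform register $u$ and a fresh fair coin $d$, postselects on the event $p'=1$ defined by ``($p=1$ and $u<a$) or ($p=0$ and $u<b$)'', and outputs $o$ in the first case and $d$ in the second.

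The construction turns on the identity $ma+(N-m)b=N\cdot 2^{k-t}$: the congruence $b\equiv 2^{k-t}\pmod{m_{\rm odd}}$ makes $m$ divide $N(2^{k-t}-b)$, because $2^{v}\mid N$ and $m_{\rm odd}$ is odd, and this is exactly what lets the integer $a$ realize the identity. Hence
\[
P_{V'_w}(p'=1)=\frac{m}{N}\cdot\frac{a}{2^{k}}+\frac{N-m}{N}\cdot\frac{b}{2^{k}}=\frac{ma+(N-m)b}{N\,2^{k}}=\frac{1}{2^{t}},
\]
an exact inverse power of two, so $V'$ has the ${\rm postBQP}_{\rm exp}$ form. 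For correctness, with $\mu=ma$, $\nu=(N-m)b$, and $\pi=P_{V_w}(o=1|p=1)$ one computes $P_{V'_w}(o=1|p'=1)=\frac{\mu\pi+\nu/2}{\mu+\nu}$, and a one-line estimate from $b<m_{\rm odd}$ and $a\ge 2^{q(|w|)-v}\lfloor 2^{k-t}/m_{\rm odd}\rfloor$ gives $\nu/\mu=O(1/N)$ for the chosen $k$; so this conditional probability differs from $\pi$ by $O(1/N)$ and, after the amplification of step (i), stays safely in the correct half for all but finitely many inputs, which are handled by hard-wiring $V'_n$.

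The step I expect to be the real obstacle is the exactness demanded by ${\rm postBQP}_{\rm exp}$: one cannot ``divide out'' the factor $f(w)$ with a single extra sub-postselection, because $2^{k-t}/m$ is generally not an integer over a power of two, so a naive gadget yields a postselection probability only approximately $1/2^{t}$. The fix above is to let the rejecting branch $p=0$ soak up the remainder $2^{k-t}\bmod m_{\rm odd}$ while emitting the unbiased coin $d$ there, so that the total postselection probability snaps to $1/2^{t}$ exactly; the accompanying thing to verify is that this remainder branch is negligible ($\nu/\mu=O(1/N)$) and so does not push the conditional output probability across the threshold. A lesser point to keep honest throughout is that each preprocessing step preserves the shape $(\text{quantumly computable})/2^{\mathrm{poly}}$ of the postselection probability.
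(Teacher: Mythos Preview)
Your proof is correct, and the overall plan---reduce everything to ${\rm postBQP}_{\rm FQP}\subseteq{\rm postBQP}_{\rm exp}$ and handle the other inclusions trivially---is exactly what the paper does. The identity $ma+(N-m)b=N\cdot 2^{k-t}$ checks out (since $ma=mb+N m_{\rm odd}\lfloor 2^{k-t}/m_{\rm odd}\rfloor=mb+N(2^{k-t}-b)$), the bounds $0\le b<a\le 2^{k}$ hold with your choice of $t,k$, and the $\nu/\mu=O(1/N)$ estimate is right, so the conditional output probability stays on the correct side after amplification.

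Where you differ from the paper is in the mechanics of the key step. The paper (following an idea of Jordan--Kobayashi--Nagaj--Nishimura) avoids the modular arithmetic on the odd part of $f(w)$ entirely, using a two-stage adjustment that is conceptually simpler. First, an \emph{additive} adjustment: flip a fair coin; on heads run $V$, on tails output $o$ uniformly and output $p=1$ with probability $(2^{t(w)+1}-f(w))/2^{h}$, where $t(w)=\lfloor\log_2 f(w)\rfloor$. This makes the new postselection probability exactly $2^{t(w)}/2^{h}$, already a power of two over a power of two, while the conditional output probability becomes a convex combination of $P_{V_w}(o=1\mid p=1)$ and $1/2$ with weight on the former at least $1/2$ (since $f(w)\ge 2^{t(w)}$), preserving a constant gap. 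Second, a \emph{multiplicative} adjustment: AND the postselection bit with an independent bit that is $1$ with probability $2^{-t(w)}$, bringing the postselection probability down to exactly $2^{-h}$ without touching the conditional output probability. Your construction achieves the same additive-plus-multiplicative effect in a single gadget by conditioning on $p$ and choosing thresholds $a,b$ in a uniform register, at the cost of the number-theoretic bookkeeping with $m_{\rm odd}$. Both routes are valid; the paper's is shorter to verify, while yours shows that the two adjustments can be fused.
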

Its proof is given in Sec.~\ref{proof2}.

Finally, we consider the classical analogue, ${\rm postBPP}_{\rm FP}$,
of ${\rm postBQP}_{\rm FP}$, and show the following result:
\begin{theorem}
${\rm UP}\cap{\rm coUP}\subseteq{\rm postBPP}_{\rm FP}\subseteq{\rm WAPP}$.
\label{classical}
\end{theorem}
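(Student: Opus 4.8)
The plan is to prove the two inclusions separately; both are elementary, and the only real content is a bookkeeping trick that exploits the symmetry between ${\rm UP}$ and ${\rm coUP}$.

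For ${\rm UP}\cap{\rm coUP}\subseteq{\rm postBPP}_{\rm FP}$: given $L\in{\rm UP}\cap{\rm coUP}$, I would fix a ${\rm UP}$ machine $M_1$ deciding $L$ and a ${\rm UP}$ machine $M_0$ deciding $\overline L$, and first put them in normal form (in the sense of Definition~\ref{def:counting_machine}) so that on every computation path each makes the same number $m(|w|)$ of binary nondeterministic choices. Thus on any input exactly one of $M_0,M_1$ has an accepting path, and that path is unique. The witnessing machine $V$ for Definition~\ref{postBPP_FP} flips one extra fair coin $c$: when $c=0$ it simulates $M_1(w)$, permanently sets $o:=1$, and sets $p:=1$ exactly on an accepting path; when $c=1$ it simulates $M_0(w)$, permanently sets $o:=0$, and sets $p:=1$ exactly on an accepting path. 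Every path of $V$ uses exactly $m(|w|)+1$ coins, and $V$ has precisely one path with $p=1$, so $P_{V_w}(p=1)=2^{-(m(|w|)+1)}$ regardless of membership; this is condition~1 of Definition~\ref{postBPP_FP} with $f\equiv 1$ and $q(n)=m(n)+1$. On that single path, $o=1$ iff $w\in L$, so $P_{V_w}(o=1\mid p=1)=1$ on yes-instances and $=0$ on no-instances, giving conditions~2 and~3, e.g.\ with $\epsilon=1/2$.

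For ${\rm postBPP}_{\rm FP}\subseteq{\rm WAPP}$: given a witnessing $V$ with $t(|w|)$ coins along every path, an ${\rm FP}$ function $f>0$, a polynomial $q$, and a constant $\epsilon$, I would let $h(w)$ be the number of computation paths of $V(w)$ on which both $p=1$ and $o=1$; this is a ${\rm \#P}$ function, computed by a counting machine that simulates $V$ and accepts exactly on those paths. The number of paths with $p=1$ equals $2^{t(|w|)}P_{V_w}(p=1)=f(w)\,2^{t(|w|)-q(|w|)}$, which — after harmlessly padding $V$ with dummy coins so that $t\ge q$ — is an ${\rm FP}$ function $F(w)>0$, and $h(w)/F(w)=P_{V_w}(o=1\mid p=1)$. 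Hence $h/F$ lies in $[\tfrac{1+\epsilon}{2},1]$ on yes-instances (the upper bound because the paths counted by $h$ form a subset of those counted by $F$) and in $[0,\tfrac{1-\epsilon}{2}]$ on no-instances. Replacing $\epsilon$ by $\epsilon/2$ makes these inequalities strict, and invoking the form of Definition~\ref{def:WAPP} with an ${\rm FP}$ denominator yields $L\in{\rm WAPP}$.

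Neither direction presents a genuine obstacle; the only technical care needed is the normal-form padding in the first inclusion, so that the ``same number of random decisions on every path'' clause of Definition~\ref{postBPP_FP} is literally met, and the ``$t\ge q$'' padding in the second so that the exact count $F(w)$ comes out integral. The point worth stating explicitly is the conceptual one: having simultaneously a ${\rm UP}$ machine for $L$ and one for $\overline L$ is exactly what forces the postselection probability to be a membership-independent constant of the form $2^{-{\rm poly}}$, the strongest possible instance of the ``${\rm FP}$'' restriction, while the converse inclusion only needs the definition to hand us the exact value of $P_{V_w}(p=1)$ as an ${\rm FP}$ function, which it does.
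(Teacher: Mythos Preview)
Your proposal is correct and follows essentially the same approach as the paper. For ${\rm UP}\cap{\rm coUP}\subseteq{\rm postBPP}_{\rm FP}$ the paper uses a single random string $x\in\{0,1\}^{q}$ to simulate the $x$-th path of \emph{both} machines in parallel (outputting $p=1,o=1$ if the ${\rm UP}$-machine for $L$ accepts and $p=1,o=0$ if the one for $\overline L$ accepts), whereas you first flip a coin to decide which machine to simulate; both variants yield a single $p=1$ path and hence $P_{V_w}(p=1)=2^{-\mathrm{poly}}$, so the difference is purely cosmetic. For ${\rm postBPP}_{\rm FP}\subseteq{\rm WAPP}$ your argument is the same as the paper's --- indeed slightly more careful, since you explicitly handle the $t\ge q$ padding and the strict-versus-nonstrict inequality in Definition~\ref{def:WAPP} by shrinking $\epsilon$, a point the paper glosses over.
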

Its proof is given in Sec.~\ref{proof3}.
Note that the inclusion 
${\rm postBPP}_{\rm FP}\subseteq{\rm WAPP}$ is a ``classical analogue"
of
${\rm postBQP}_{\rm FP}\subseteq{\rm AWPP}$, since WAPP
is a ``\#P analogue" of AWPP.
Since ${\rm WAPP}\subseteq{\rm AM}$~\cite{BGM03} and ${\rm BQP}\subseteq{\rm AM}$
is unlikely, 
it is also unlikely that ${\rm BQP}\subseteq{\rm postBPP}_{\rm FP}$.
Furthermore, since it is unlikely that BQP contains
${\rm UP}\cap{\rm coUP}$,
the inclusion
${\rm UP}\cap{\rm coUP}\subseteq{\rm postBPP}_{\rm FP}$ suggests that
${\rm postBPP}_{\rm FP}={\rm BPP}$
and
${\rm postBPP}_{\rm FP}\subseteq{\rm BQP}$
are unlikely.

\section{Proof of Theorem \ref{main}}
\label{proof1}
We first show
${\rm AWPP}\cap{\rm coAWPP}\subseteq{\rm postBQP}_{\rm aFP}$.
Since ${\rm AWPP}={\rm coAWPP}$~\cite{Li}, this means
${\rm AWPP}\subseteq{\rm postBQP}_{\rm aFP}$.

Let us assume that a language $L$ is in 
${\rm AWPP}\cap{\rm coAWPP}$.
Then, for any polynomial $r$, there exist $g_1,g_2\in {\rm GapP}$
and $f_1,f_2\in{\rm FP}$ ($f_1>0$, $f_2>0$)
such that 
\begin{itemize}
\item[1.]
If $w\in L$ then
\begin{eqnarray*}
1-2^{-r(|w|)}\le \frac{g_1(w)}{f_1(w)}\le 1,~\mbox{and}~
0\le\frac{g_2(w)}{f_2(w)}\le 2^{-r(|w|)}.
\end{eqnarray*}
\item[2.]
If $w\notin L$ then
\begin{eqnarray*}
0\le \frac{g_1(w)}{f_1(w)}\le 2^{-r(|w|)},~\mbox{and}~
1-2^{-r(|w|)}\le \frac{g_2(w)}{f_2(w)}\le 1.
\end{eqnarray*}
\end{itemize}
In the following, 
for simplicity, we omit the $|w|$ dependency of $r$,
and just write $r(|w|)$ as $r$.

Then, there exist two GapP functions 
$h_1(w)\equiv g_1(w)f_2(w)$
and
$h_2(w)\equiv g_2(w)f_1(w)$, such that
\begin{itemize}
\item[1.]
If $w\in L$ then
\begin{eqnarray*}
1-2^{-r}\le \frac{h_1(w)}{f_1(w)f_2(w)}\le 1,~\mbox{and}~
0\le\frac{h_2(w)}{f_1(w)f_2(w)}\le 2^{-r}.
\end{eqnarray*}
\item[2.]
If $w\notin L$ then
\begin{eqnarray*}
0\le \frac{h_1(w)}{f_1(w)f_2(w)}\le 2^{-r},~\mbox{and}~
1-2^{-r}\le \frac{h_2(w)}{f_1(w)f_2(w)}\le 1.
\end{eqnarray*}
\end{itemize}
Then there exist two counting machines $C^1$ and $C^2$
such that
$h_1(w)=C_a^1(w)-C_r^1(w)$ and
$h_2(w)=C_a^2(w)-C_r^2(w)$,
where $C_a^j(w)$ and $C_r^j(w)$ $(j=1,2)$ are the numbers of 
accepting and rejecting paths of $C^j$ on input $w$,
respectively.

There exist two normal counting machines $N^1$ and $N^2$ such that
$h_1(w)=\frac{1}{2}(N_a^1(w)-N_r^1(w))$ and
$h_2(w)=\frac{1}{2}(N_a^2(w)-N_r^2(w))$~\cite{GapP}.
Without loss of generality, we can assume that computation paths of $N^1$ and $N^2$
on input $w$ can be represented by strings in $\{0,1\}^{q(|w|)}$,
where $q$ is a polynomial.
(In the following, for simplicity, we write $q(|w|)$ as $q$.)
Then we consider a uniform family $V=\{V_n\}_n$ of quantum circuits defined by the following
procedure on input $w$.
First, the state
\begin{eqnarray*}
\frac{
|0\rangle^{\otimes 2k}
}{\sqrt{2^{q+1}}}\sum_{x\in\{0,1\}^{q}}|x\rangle
\Big((-1)^{N^1(w,x)}|N^1(w,x)\rangle|1\rangle
+(-1)^{N^2(w,x)}|N^2(w,x)\rangle|0\rangle\Big)
\end{eqnarray*}
can be generated by a polynomial-size quantum circuit.
Here, $k$ is a polynomial chosen later ($k$ precisely means $k(|w|)$),
and $N^j(w,x)=0$ (=1, resp.) if the path $x$ of $N^j$ on input $w$
is an accepting (rejecting, resp.) one.
Let us postselect the first, second, and third registers to 
$|+\rangle^{\otimes 2k+q+1}$.
The (unnormalized) state on the last register, which is the output qubit,
after the postselection is
\begin{eqnarray*}
\frac{1}{2^{q+1+k}}
\Big((N_a^1(w)-N_r^1(w))|1\rangle
+(N_a^2(w)-N_r^2(w))|0\rangle
\Big),
\end{eqnarray*}
and therefore 
\begin{eqnarray*}
P_{V_w}(p=1)=\frac{(N_a^1(w)-N_r^1(w))^2+(N_a^2(w)-N_r^2(w))^2}
{2^{2q+2+2k}}
								=\frac{4(h_1^2(w)+h_2^2(w))}
{2^{2q+2+2k}}.
\end{eqnarray*}
Therefore, irrespective of $w\in L$ or $w\notin L$,
we obtain
\begin{eqnarray*}
\frac{f_1^2(w)f_2^2(w)}{2^{2q+2k}}(1-2^{-r})^2
\le
P_{V_w}(p=1)\le\frac{f_1^2(w)f_2^2(w)}{2^{2q+2k}}(1+2^{-2r}).
\end{eqnarray*}
Let us define $s(w)=f_1^2(w)f_2^2(w)$. Then the above inequality
means
\begin{eqnarray*}
\frac{s(w)}{2^{2q+2k}}(1-2^{-r})^2
\le
P_{V_w}(p=1)\le\frac{s(w)}{2^{2q+2k}}(1+2^{-2r}).
\end{eqnarray*}
Since
$
1-2^{-r+1}
\le
(1-2^{-r})^2
$
and
$
1+2^{-2r}
\le
1+2^{-r+1},
$
we obtain
\begin{eqnarray*}
\frac{s(w)}{2^{2q+2k}}(1-2^{-r+1})
\le
P_{V_w}(p=1)
\le\frac{s(w)}{2^{2q+2k}}(1+2^{-r+1}),
\end{eqnarray*}
which means, if we take $r\ge2$,
\begin{eqnarray}
\frac{P_{V_w}(p=1)}{1+2^{-r+1}}
\le
\frac{s(w)}{2^{2q+2k}}
\le
\frac{P_{V_w}(p=1)}
{1-2^{-r+1}}.
\label{equation1}
\end{eqnarray}
Note that
\begin{eqnarray}
\frac{1}{1-2^{-r+1}}
&\le&1+2^{-r+2},
\label{equation2}
\end{eqnarray}
and
\begin{eqnarray}
\frac{1}{1+2^{-r+1}}
-(1-2^{-r+2})
=\frac{1}{1+2^{-r+1}}
(2^{-r+1}+2^{-2r+3})
\ge0.
\label{equation3}
\end{eqnarray}
Therefore, from Eqs.~(\ref{equation2}) and (\ref{equation3}), 
Eq.~(\ref{equation1}) becomes
\begin{eqnarray*}
(1-2^{-r+2})P_{V_w}(p=1)\le\frac{s(w)}{2^{2q+2k}}\le
(1+2^{-r+2})P_{V_w}(p=1),
\end{eqnarray*}
which means
\begin{eqnarray*}
\Big|P_{V_w}(p=1)
-\frac{s(w)}{2^{2q+2k}}\Big|
\le 2^{-r+2}P_{V_w}(p=1).
\end{eqnarray*}
Remember that $s(w)= f_1^2(w)f_2^2(w)>0$ and it is in FP.
We denote $t\equiv 2q+2k$ and take $k$ such that
$s(w)\le 2^{t}$.
For any polynomial $r_2$, let us take $r\ge r_2+2$.
Then,
\begin{eqnarray*}
\Big|P_{V_w}(p=1)
-\frac{s(w)}{2^{t}}\Big|
\le 2^{-r_2}P_{V_w}(p=1).
\end{eqnarray*}

Furthermore, from the state after the postselection, we have
\begin{eqnarray*}
P_{V_w}(o=1|p=1)=\frac
{(N_a^1(w)-N_r^1(w))^2}
{(N_a^1(w)-N_r^1(w))^2
+(N_a^2(w)-N_r^2(w))^2
}
=\frac{h_1^2(w)}{h_1^2(w)+h_2^2(w)}.
\end{eqnarray*}
For any polynomial $r_1$, let us take $r\ge r_1+2$.
Then, if $w\in L$ we obtain
\begin{eqnarray*}
P_{V_w}(o=1|p=1)=\frac{h_1^2(w)}{h_1^2(w)+h_2^2(w)}
												\ge\frac{(1-2^{-r})^2}{1+2^{-2r}}
												\ge1-2^{-r_1},
\end{eqnarray*}
and if $w\notin L$ we obtain
\begin{eqnarray*}
P_{V_w}(o=1|p=1)=\frac{h_1^2(w)}{h_1^2(w)+h_2^2(w)}
												\le\frac{2^{-2r}}{(1-2^{-r})^2}
												\le2^{-r_1}.
\end{eqnarray*}
Therefore, by taking $r\ge \max(r_1+2,r_2+2)$, $L$ is in ${\rm postBQP}_{\rm aFP}$.

Next we show
${\rm postBQP}_{\rm aFP}\subseteq{\rm AWPP}$.
Let us assume that a language $L$ is in ${\rm postBQP}_{\rm aFP}$.
Then for any polynomials $r_1$ and $r_2$
there exist a uniform family $V=\{V_n\}_n$ of polynomial-size quantum circuits,
an FP function $f$, and a polynomial $q$ satisfying the condition in 
Definition~\ref{def:postBQP_aFP}.
From Theorem~\ref{PGapP}, there exist a GapP function $g$
and a polynomial $s$ such that 
$
P_{V_w}(o=1,p=1)=\frac{g(w)}{2^{s}},
$
where $P_{V_w}(o=1,p=1)$ is the joint probability distribution for $o$ and $p$.
Therefore, if we take $r_2\ge1$, we obtain
\begin{itemize}
\item[1.]
If $w\in L$ then
$
P_{V_w}(p=1)(1-2^{-r_1})\le P_{V_w}(o=1,p=1)\le P_{V_w}(p=1),
$
which means
\begin{eqnarray*}
\frac{f(w)}{2^{q}(1+2^{-r_2})}(1-2^{-r_1})
\le \frac{g(w)}{2^{s}}\le 
\frac{f(w)}{2^{q}(1-2^{-r_2})},
\end{eqnarray*}
and therefore
\begin{eqnarray*}
\frac{1-2^{-r_2}}{1+2^{-r_2}}(1-2^{-r_1})
\le \frac{g(w)2^{q}(1-2^{-r_2})}{2^{s}f(w)}\le 1.
\end{eqnarray*}

\item[2.]
If $w\notin L$ then
$
0\le P_{V_w}(o=1,p=1)\le 2^{-r_1}P_{V_w}(p=1),
$
which means
\begin{eqnarray*}
0\le \frac{g(w)}{2^{s}}
\le 2^{-r_1}\frac{f(w)}{2^{q}(1-2^{-r_2})},
\end{eqnarray*}
and therefore
\begin{eqnarray*}
0\le \frac{g(w)2^{q}(1-2^{-r_2})}{2^{s}f(w)}
\le 2^{-r_1}.
\end{eqnarray*}
\end{itemize}
Note that
\begin{eqnarray*}
\frac{g(w)2^{q}(1-2^{-r_2})}{2^{s}f(w)}
=
\frac{g(w)2^{q}(2^{r_2}-1)}{2^{s+r_2}f(w)},
\end{eqnarray*}
and we can see
$g(w)2^{q}(2^{r_2}-1)\in {\rm GapP}$,
$2^{s+r_2}f(w)>0$,
and $2^{s+r_2}f(w)\in {\rm FP}$.

If we take $r_1=r_2\ge3$,
$
\frac{(1-2^{-r_1})^2}{1+2^{-r_1}}
\ge
\frac{2}{3}
$,
and $2^{-r_1}\le \frac{1}{3}$.
Therefore $L$ is in AWPP due to the definition of
AWPP in Ref.~\cite{Fenner}.

\section{Proof of Theorem \ref{equivalences}}
\label{proof2}

The inclusions 
${\rm postBQP}_{\rm exp}\subseteq{\rm postBQP}_{\rm size}$ 
and
${\rm postBQP}_{\rm FQP}\supseteq{\rm postBQP}_{\rm FP}\supseteq{\rm postBQP}_{\rm exp}$
are obvious.
Let us show
${\rm postBQP}_{\rm FQP}\subseteq{\rm postBQP}_{\rm exp}$.
Its proof uses the idea of an additive adjustment of the acceptance probability from
Ref.~\cite{JKNN12}
with a standard multiplicative adjustment.

Let us assume that a language $L$ is in ${\rm postBQP}_{\rm FQP}$.
Then, there exist a uniform family $V=\{V_n\}_n$ of polynomial-size quantum circuits,
a function $f:\{0,1\}^*\to {\mathbb N}$ whose $f(w)$ can be calculated
by another uniform family of polynomial-size quantum circuits for any input $w$, and a polynomial $h\ge0$ such that
$
P_{V_w}(p=1)=\frac{f(w)}{2^{h}}
$
($h$ precisely means $h(|w|)$)
and
\begin{itemize}
\item[1.]
If $w\in L$, then
$\frac{9}{10}\le P_{V_w}(o=1|p=1)\le1$.
\item[2.]
If $w\notin L$, then
$0\le P_{V_w}(o=1|p=1)\le \frac{1}{10}$.
\end{itemize}

We can take a function $t:\{0,1\}^*\to{\mathbb N}\cup\{0\}$ 
such that
$
2^{t(w)}\le f(w)<2^{t(w)+1}
$
for any input $w$. Note that
$t(w)$ can be calculated by a uniform family of polynomial-size quantum circuits.

From $V$, we construct the uniform family $W=\{W_n\}_n$ of
polynomial-size quantum circuits implemented on input $w$ as follows:
\begin{itemize}
\item[1.]
$W_{|w|}$ flips a coin. If heads, it simulates $V_{|w|}$.
\item[2.]
If tails, $W_{|w|}$ outputs $o=1$ with probability 1/2, and
$p=1$ with probability 
$
\frac{2^{t(w)+1}-f(w)}{2^{h}}
$
independently.
\end{itemize}
Since
\begin{eqnarray*}
2^{h}-2^{t(w)+1}+f(w)\ge f(w)-2^{t(w)+1}+f(w)
																										=2(f(w)-2^{t(w)})
																										\ge0,
\end{eqnarray*}
we obtain
$
\frac{2^{t(w)+1}-f(w)}{2^{h}}\le1.
$

Then, 
\begin{eqnarray*}
P_{W_w}(p=1)=\frac{1}{2}P_{V_w}(p=1)+\frac{1}{2}\frac{2^{t(w)+1}-f(w)}{2^{h}}
												=\frac{2^{t(w)}}{2^{h}},
\end{eqnarray*}
and
\begin{eqnarray*}
P_{W_w}(o=1|p=1)&=&\frac{P_{W_w}(o=1,p=1)}{P_{W_w}(p=1)}\\
																&=&\frac{\frac{1}{2}P_{V_w}(o=1|p=1)P_{V_w}(p=1)
+\frac{1}{2}\frac{2^{t(w)+1}-f(w)}{2^{h}}\frac{1}{2}}
{\frac{2^{t(w)}}{2^{h}}}\\
&=&\frac{f(w)}{2^{t(w)+1}}P_{V_w}(o=1|p=1)
+\frac{1}{2}\frac{2^{t(w)+1}-f(w)}{2^{t(w)+1}}.
\end{eqnarray*}

If $w\in L$,
\begin{eqnarray*}
P_{W_w}(o=1|p=1)
&=&\frac{f(w)}{2^{t(w)+1}}P_{V_w}(o=1|p=1)
+\frac{1}{2}\frac{2^{t(w)+1}-f(w)}{2^{t(w)+1}}\\
&\ge&\frac{1}{2}\frac{9}{10}
+\frac{1}{2}\frac{1}{2}
=\frac{7}{10}.
\end{eqnarray*}
If $w\notin L$,
\begin{eqnarray*}
P_{W_w}(o=1|p=1)
&=&\frac{f(w)}{2^{t(w)+1}}P_{V_w}(o=1|p=1)
+\frac{1}{2}\frac{2^{t(w)+1}-f(w)}{2^{t(w)+1}}\\
&\le&\frac{1}{2}\frac{1}{10}
+\frac{1}{2}\frac{1}{2}
=\frac{3}{10}.
\end{eqnarray*}
Here, we have used the fact that
$
\alpha\frac{9}{10}+(1-\alpha)\frac{1}{2}\ge
\frac{1}{2}\frac{9}{10}+\frac{1}{2}\frac{1}{2}
$
and
$
\alpha\frac{1}{10}+(1-\alpha)\frac{1}{2}\le
\frac{1}{2}\frac{1}{10}+\frac{1}{2}\frac{1}{2}
$
for $\alpha\ge 1/2$.
Note that $f(w)/2^{t(w)+1}\ge1/2$, since $f(w)\ge2^{t(w)}$.

From $W$, we construct the uniform family 
$R=\{R_n\}_n$ of polynomial-size quantum circuits implemented on input $w$
in the following way:
\begin{itemize}
\item[1.]
$R_{|w|}$ simulates $W_{|w|}$.
\item[2.]
$R_{|w|}$ outputs $o=1$ if and only if $W_{|w|}$ outputs $o=1$.
\item[3.]
$R_{|w|}$ generates a random bit $b$ which takes $b=1$ with 
probability $2^{-t(w)}$.
(Note that $t(w)\le h$.)
\item[4.]
$R_{|w|}$ outputs $p=1$ if and only if $b=1$ and $W_{|w|}$ outputs $p=1$.
\end{itemize}
Then,
$
P_{R_w}(o=1|p=1)=P_{W_w}(o=1|p=1)
$
and
$
P_{R_w}(p=1)=P_{W_w}(p=1)2^{-t(w)}=2^{-h}.
$
Therefore, $L$ is in ${\rm postBQP}_{\rm exp}$.

\section{Proof of Theorem \ref{classical}}
\label{proof3}
Let us first show 
${\rm postBPP}_{\rm FP}\subseteq{\rm WAPP}$.
We assume that a language $L$ is in ${\rm postBPP}_{\rm FP}$. 
Then, there exist a probabilistic Turing machine $V$,
an FP function $f>0$, and a polynomial $s$ such that
$
P_{V_w}(p=1)=\frac{f(w)}{2^{s}}.
$
There exist a \#P function $g$ and a polynomial $q$ such that
$
P_{V_w}(o=1,p=1)=\frac{g(w)}{2^{q}}.
$
Therefore, by the conditions on $P_{V_w}(o=1|p=1)$,
we obtain if $w\in L$,
$
\frac{1+\epsilon}{2}\le \frac{2^{s}g(w)}{2^{q}f(w)}\le1,
$
and if $w\notin L$,
$
0\le \frac{2^{s}g(w)}{2^{q}f(w)}\le \frac{1-\epsilon}{2}.
$
Since $2^{s}g(w)$ is a \#P function and $2^{q}f(w)$ is an FP function, $L$ is in WAPP.

Now let us show
${\rm UP}\cap{\rm coUP}\subseteq{\rm postBPP}_{\rm FP}$.
Let us assume that a language $L$ is in 
${\rm UP}\cap{\rm coUP}$.
Then, there exist two polynomial-time nondeterministic Turing machines $N$ and $M$ such that
\begin{itemize}
\item[1.]
If $w\in L$ then 
$N$ has exactly one accepting path,
and all paths of $M$ reject. 
\item[2.]
If $w\notin L$ then 
all paths of $N$ reject, and
$M$ has exactly one accepting path.
\end{itemize}
Without loss of generality, we can assume that both $N$ and $M$
have $2^{q}$ computation paths.
Let us consider the following algorithm $V$:
\begin{itemize}
\item[1.]
Randomly choose $x\in\{0,1\}^{q}$, and simulate
the computation paths represented by $x$ of $N$ and $M$ on input $w$.
\item[2.]
If both $N$ and $M$ reject, output $p=0$ and $o=0$.
If $N$ accepts and $M$ rejects, output $p=1$ and $o=1$.
If $M$ accepts and $N$ rejects, output $p=1$ and $o=0$.
\item[3.]
Postselect on $p=1$.
\end{itemize}
The probability of postselecting $p=1$ is $2^{-q}$.
Furthermore,
$P_{V_w}(o=1|p=1)=
1$ if $w\in L$,
and it is 0
if $w\notin L$.
Therefore, $L$ is in ${\rm postBPP}_{\rm FP}$.

{\bf Acknowledgements}.
TM is supported by the Tenure Track System by MEXT Japan,
the JSPS Grant-in-Aid for Young Scientists (B) No.26730003, and 
the MEXT JSPS Grant-in-Aid for Scientific Research on Innovative Areas No.15H00850.
HN is supported by the JSPS Grant-in-Aid for Scientific Research (A) 
Nos.23246071, 24240001, 26247016,
and (C) No.25330012,
and the MEXT JSPS Grant-in-Aid for Scientific Research on
Innovative Areas No.24106009. 
We acknowledge an anonymous reviewer for pointing out a possibility
of improving the lowerbound of ${\rm postBQP}_{\rm FP}$
in an early draft of this paper.

\appendix
\section{Proof of Theorem~\ref{main2}}
\label{app1}


The proof is the same as that of
${\rm postBQP}_{\rm aFP}={\rm AWPP}$ (Theorem~\ref{main})
given in Sec.~\ref{proof1}.

First, we show
${\rm APP}\subseteq{\rm postBQP}_{\rm asize}$.
Since ${\rm APP}={\rm coAPP}$~\cite{Li}, 
we show
${\rm APP}\cap{\rm coAPP}\subseteq{\rm postBQP}_{\rm asize}$.
The rest of the proof
is the same as that of
${\rm AWPP}\cap{\rm coAWPP}\subseteq{\rm postBQP}_{\rm aFP}$
except that two FP functions $f_1(w)$ and $f_2(w)$ are replaced with
two GapP functions $f_1(1^{|w|})$ and $f_2(1^{|w|})$.

Furthermore, the proof of
${\rm postBQP}_{\rm asize}\subseteq{\rm APP}$
is also the same as that of
${\rm postBQP}_{\rm aFP}\subseteq{\rm AWPP}$.
We have only to replace the FP function $f(w)$
with a GapP function $f(1^{|w|})$.

\section{Proof of Theorem~\ref{WPPinpostBQP_FP}}
\label{app2}

Since ${\rm WPP}={\rm coWPP}$, we show
${\rm WPP}\cap{\rm coWPP}\subseteq{\rm postBQP}_{\rm FP}$.
Let us assume that a language $L$ is in 
${\rm WPP}\cap{\rm coWPP}$. Then, there exist GapP functions $g_1$ and $g_2$,
and FP functions $f_1$ and $f_2$ with $0\notin range(f_1)$ and $0\notin range(f_2)$ 
such that
\begin{itemize}
\item[1.]
If $w\in L$
\begin{eqnarray*}
g_1(w)&=&f_1(w),\\
g_2(w)&=&0.
\end{eqnarray*}
\item[2.]
If $w\notin L$
\begin{eqnarray*}
g_1(w)&=&0,\\
g_2(w)&=&f_2(w).
\end{eqnarray*}
\end{itemize}

Then, there exist GapP functions 
$g_1'(w)\equiv g_1(w)f_2(w)$ and $g_2'(w)\equiv g_2(w)f_1(w)$ such that
\begin{itemize}
\item[1.]
If $w\in L$
\begin{eqnarray*}
g_1'(w)&=&g_1(w)f_2(w)=f_1(w)f_2(w),\\
g_2'(w)&=&g_2(w)f_1(w)=0.
\end{eqnarray*}
\item[2.]
If $w\notin L$
\begin{eqnarray*}
g_1'(w)&=&g_1(w)f_2(w)=0,\\
g_2'(w)&=&g_2(w)f_1(w)=f_2(w)f_1(w).
\end{eqnarray*}
\end{itemize}
In other words, there exist counting machines $C^1$ and $C^2$
such that
\begin{itemize}
\item[1.]
If $w\in L$
\begin{eqnarray*}
C_a^1(w)-C_r^1(w)&=&f_1(w)f_2(w),\\
C_a^2(w)-C_r^2(w)&=&0.
\end{eqnarray*}
\item[2.]
If $w\notin L$
\begin{eqnarray*}
C_a^1(w)-C_r^1(w)&=&0,\\
C_a^2(w)-C_r^2(w)&=&f_2(w)f_1(w).
\end{eqnarray*}
\end{itemize}
Here, $C_a^j(w)$ and $C_r^j(w)$ are numbers of accepting and rejecting
paths of $C^j$ on input $w$, respectively.

There exist normal counting machines $N^1$ and $N^2$ such that~\cite{GapP}
\begin{itemize}
\item[1.]
If $w\in L$
\begin{eqnarray*}
N_a^1(w)-N_r^1(w)&=&2f_1(w)f_2(w),\\
N_a^2(w)-N_r^2(w)&=&0.
\end{eqnarray*}
\item[2.]
If $w\notin L$
\begin{eqnarray*}
N_a^1(w)-N_r^1(w)&=&0,\\
N_a^2(w)-N_r^2(w)&=&2f_1(w)f_2(w).
\end{eqnarray*}
\end{itemize}
Without loss of generality, we can assume that both $N^1$ and $N^2$ have
computation trees on input $w$ whose paths are represented by $\{0,1\}^{q(|w|)}$.

For a given input $w$, $V=\{V_n\}_n$ is defined as the following procedure.
First, we generate 
\begin{eqnarray*}
\frac{1}{\sqrt{2^{q(|w|)+1}}}\sum_{x\in\{0,1\}^{q(|w|)}}
\Big(&&(-1)^{N^1(w,x)}|x\rangle\otimes|N^1(w,x)\rangle\otimes|1\rangle\\
&&+(-1)^{N^2(w,x)}|x\rangle\otimes|N^2(w,x)\rangle\otimes|0\rangle
\Big)
\end{eqnarray*}
by a polynomial-size quantum circuit.
Let us postselect the first and second registers on $|+\rangle^{\otimes q(|w|)+1}$.
Then, the (unnormalized) state after the postselection is
\begin{eqnarray*}
\frac{1}{2^{q(|w|)+1}}
\Big((N^1_a(w)-N^1_r(w))|1\rangle
+(N^2_a(w)-N^2_r(w))|0\rangle
\Big).
\end{eqnarray*}
Therefore, the postselection probability is
\begin{eqnarray*}
P_{V_w}(p=1)&=&\frac{(N_a^1(w)-N_r^1(w))^2+(N_a^2(w)-N_r^2(w))^2}{2^{2q(|w|)+2}}\\
								&=&\frac{(2f_1(w)f_2(w))^2}{2^{2q(|w|)+2}}\ge \frac{1}{2^{2q(|w|)}}.
\end{eqnarray*}
Furthermore,
\begin{eqnarray*}
P_{V_w}(o=1|p=1)=
\left\{
\begin{array}{ll}
1&(w\in L),\\
0&(w\notin L).
\end{array}
\right.
\end{eqnarray*}
Therefore, $L$ is in ${\rm postBQP}_{\rm FP}$.

\section{Proof of Theorem~\ref{postBQPequalpostBQP_leexp}}
\label{app3}

${\rm postBQP}\supseteq{\rm postBQP}_{\le{\rm exp}}$ is obvious.
Let us show
${\rm postBQP}\subseteq{\rm postBQP}_{\le {\rm exp}}$.
We assume that a language $L$ is in ${\rm postBQP}$.
Then, from the uniform family $V=\{V_n\}_n$ of polynomial-size quantum circuits
that assures $L\in{\rm postBQP}$, we construct the
uniform family $W=\{W_n\}_n$ of polynomial-size quantum circuits
which run as follows on input $w$: $W_{|w|}$ generates a random bit $b$
which is $b=1$ with probability $2^{-q(|w|)}$, where $q>0$ is any polynomial.
Then, $W_{|w|}$ simulates $V_{|w|}$ and outputs $p=1$ 
if $b=1$ and $V_{|w|}$ outputs $p=1$.
$W_{|w|}$ outputs $o=1$ 
if $V_{|w|}$ outputs $o=1$.

Then,
\begin{eqnarray*}
P_{W_w}(p=1)=P_{V_w}(p=1)2^{-q(|w|)}\le 2^{-q(|w|)}
\end{eqnarray*}
and
\begin{eqnarray*}
P_{W_w}(o=1|p=1)=P_{V_w}(o=1|p=1).
\end{eqnarray*}
Therefore, $L$ is in ${\rm postBQP}_{\le {\rm exp}}$.

\section{Another proof of ${\rm postBQP}={\rm PP}$}
\label{app4}
Here we give another proof of
${\rm postBQP}={\rm PP}$.
Before showing the proof, we will give  
two definitions of PP.

A standard definition of PP is as follows.

\begin{definition}
A language $L$ is in PP iff there exists a polynomial-time non-deterministic
Turing machine such that
\begin{itemize}
\item[1.]
If $w\in L$ then at least 1/2 of computation paths accept.
\item[2.]
If $w\notin L$ then less than 1/2 of computation paths accept.
\end{itemize}
\end{definition}

There is another definition of PP that we will use:

\begin{definition} (Fortnow~\cite[Theorem 6.4.16]{Li})
A language $L$ is in PP
iff
for any polynomial $r$, there exist $f,g\in{\rm GapP}$ such that
$f>0$ and
\begin{itemize}
\item[1.]
If $w\in L$ then $1-2^{-r(|w|)}\le\frac{g(w)}{f(w)}\le 1$.
\item[2.]
If $w\notin L$ then $0\le\frac{g(w)}{f(w)}\le 2^{-r(|w|)}$.
\end{itemize}
\end{definition}

\begin{theorem} (Aaronson~\cite{postBQP})
${\rm PP}={\rm postBQP}$.
\end{theorem}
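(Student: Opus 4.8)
The plan is to re-run the proof of Theorem~\ref{main} almost verbatim, replacing the role of the definition of AWPP by Fortnow's GapP characterization of PP and dropping the ``approximately FP'' bookkeeping, since postBQP places no constraint on the postselection probability beyond $P_{V_w}(p=1)\ge 2^{-u(|w|)}$.

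For the inclusion ${\rm PP}\subseteq{\rm postBQP}$: using ${\rm PP}={\rm coPP}$, it suffices to treat $L\in{\rm PP}\cap{\rm coPP}$. Fortnow's definition then supplies, for every polynomial $r$, GapP functions $g_1,g_2$ and GapP functions $f_1,f_2>0$ obeying exactly the inequalities displayed in Sec.~\ref{proof1} (now with $f_1,f_2\in{\rm GapP}$ rather than ${\rm FP}$, so $g_2/f_2$ plays the role of the approximate indicator of the complement). I would set $h_1\equiv g_1 f_2$ and $h_2\equiv g_2 f_1$ --- products of GapP functions are GapP --- pass to normal counting machines $N^1,N^2$ computing $2h_1,2h_2$, and build the same circuit family $V$: prepare the superposition over $x\in\{0,1\}^q$, postselect the first three registers onto $|+\rangle^{\otimes 2k+q+1}$, and obtain on the output qubit the unnormalized state $\frac{1}{2^{q+1+k}}\big((N^1_a-N^1_r)|1\rangle+(N^2_a-N^2_r)|0\rangle\big)$. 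Then $P_{V_w}(o=1\mid p=1)=h_1^2/(h_1^2+h_2^2)$, which lies above $2/3$ on $L$ and below $1/3$ off $L$ for $r$ large, exactly as in the proof of Theorem~\ref{main}. The one new point to verify is the postselection lower bound: since $f_1,f_2$ are positive integers, $f_1 f_2\ge 1$, so $w\in L$ forces $h_1\ge (1-2^{-r})f_1 f_2\ge 1/2$ while $w\notin L$ forces $h_2\ge 1/2$; either way $h_1^2+h_2^2\ge 1/4$ and hence $P_{V_w}(p=1)=4(h_1^2+h_2^2)/2^{2q+2+2k}\ge 2^{-(2q+2k+2)}$, so one may take $u=2q+2k+2$. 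Thus $L\in{\rm postBQP}$, and no analogue of the $s(w)$ and $r_2$ manipulations of Sec.~\ref{proof1} is needed.

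For the inclusion ${\rm postBQP}\subseteq{\rm PP}$: given $L\in{\rm postBQP}$, first amplify to thresholds $(2^{-r},1-2^{-r})$ for an arbitrary polynomial $r$, yielding a uniform family $V$ with $P_{V_w}(p=1)\ge 2^{-u(|w|)}$. Applying Theorem~\ref{PGapP} once to the joint event $(o,p)=(1,1)$ and once to the event $p=1$ (regarding the postselection qubit, resp.\ the AND of the two qubits, as the ``output''), and passing to a common denominator with a polynomial $s\ge u$, I obtain GapP functions $g,g'$ with $P_{V_w}(o=1,p=1)=g(w)/2^s$ and $P_{V_w}(p=1)=g'(w)/2^s$; then $g'(w)=2^s P_{V_w}(p=1)\ge 2^{s-u}\ge 1>0$ and $0\le g(w)\le g'(w)$. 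Since $P_{V_w}(o=1\mid p=1)=g(w)/g'(w)$, the postBQP promise reads $1-2^{-r}\le g(w)/g'(w)\le 1$ on $L$ and $0\le g(w)/g'(w)\le 2^{-r}$ off $L$; as $r$ is arbitrary this is precisely Fortnow's definition of PP for the GapP pair $(g,g')$, so $L\in{\rm PP}$.

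Every estimate above already appears, in the FP setting, in the proof of Theorem~\ref{main}, so the write-up is short. The only genuinely new --- and the only delicate --- ingredient is the lower bound $P_{V_w}(p=1)\ge 2^{-u(|w|)}$ in the first direction: without the $g_2$-branch the postselection probability could vanish when $w\notin L$, and it is here that closure of PP under complement is used. This is exactly the ``tactical'' appeal to closure under complement advertised in the introduction.
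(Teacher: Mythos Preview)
Your proof is correct. The direction ${\rm postBQP}\subseteq{\rm PP}$ is essentially identical to the paper's argument in Appendix~\ref{app4}: both apply Theorem~\ref{PGapP} to $P_{V_w}(o=1,p=1)$ and $P_{V_w}(p=1)$, pass to a common denominator, and read off Fortnow's GapP characterization of PP.

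The direction ${\rm PP}\subseteq{\rm postBQP}$, however, is organized genuinely differently. You invoke ${\rm PP}={\rm coPP}$, take two witness pairs $(g_1,f_1)$ and $(g_2,f_2)$, and rerun the normal-counting-machine construction of Sec.~\ref{proof1}; the complementary branch $h_2$ is what keeps $h_1^2+h_2^2$ bounded away from $0$ and hence guarantees $P_{V_w}(p=1)\ge 2^{-u}$. The paper instead works with a \emph{single} PP witness pair $(g,f)$: it feeds $g$ and $f$ separately through Theorem~\ref{GapPP} to obtain two circuit families $V',W'$ with acceptance probabilities proportional to $g(w)^2$ and $f(w)^2$, then mixes them with a biased coin so that the $f$-branch (which is always positive) supplies the postselection mass. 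Thus the paper avoids closure under complement entirely for this direction, at the price of invoking Theorem~\ref{GapPP} as a black box and doing some ad hoc arithmetic with the mixing weights. Your route makes the parallel with Theorems~\ref{main} and~\ref{main2} more transparent and reuses their circuitry verbatim; the paper's route is more self-contained in that it needs only the forward PP definition. Both are short and either would serve.
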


\begin{proof}
First we show ${\rm postBQP}\subseteq{\rm PP}$.
We assume that a language $L$ is in postBQP. Then,
for any polynomial $r$, there exists a uniform family $\{V_n\}_n$
of polynomial-size quantum circuits.
As in the proof of ${\rm postBQP}_{\rm FP}\subseteq{\rm AWPP}$,
if $w\in L$, 
\begin{eqnarray*}
&&1-2^{-r}\le P_{V_w}(o=1|p=1)\le 1\\
&\Leftrightarrow&
1-2^{-r}\le\frac{P_{V_w}(o=1,p=1)}{P_{V_w}(p=1)}\le 1\\
&\Leftrightarrow&
1-2^{-r}\le\frac{g(w)2^{q'}}{2^{q}f(w)}\le 1
\end{eqnarray*}
for $f,g\in {\rm GapP}$ and polynomials $q$ and $q'$.
Here, we have used the fact from Theorem~\ref{PGapP} that 
\begin{eqnarray*}
P_{V_w}(o=1,p=1)&=&\frac{g(w)}{2^{q}}\\
P_{V_w}(p=1)&=&\frac{f(w)}{2^{q'}}
\end{eqnarray*}
for some $g,f\in{\rm GapP}$ and polynomials $q$ and $q'$.
Note that for simplicity, we omit the $|w|$ dependencies of some polynomials.

If $w\notin L$
\begin{eqnarray*}
&&0\le P_{V_w}(o=1|p=1)\le 2^{-r}\\
&\Leftrightarrow&
0\le\frac{P_{V_w}(o=1,p=1)}{P_{V_w}(p=1)}\le 2^{-r}\\
&\Leftrightarrow&
0\le\frac{g(w)2^{q'}}{2^{q}f(w)}\le 2^{-r}.
\end{eqnarray*}
Since $2^{q'}g(w),2^{q}f(w)\in{\rm GapP}$,
$L$ is in PP.

Second, let us show ${\rm PP}\subseteq{\rm postBQP}$.
We assume that a language $L$ is in PP.
If $w\in L$, for any polynomial $r$, there exist $g,f\in{\rm GapP}$ such that
\begin{eqnarray*}
&&(1-2^{-r})^2\le\frac{g(w)^2}{f(w)^2}.
\end{eqnarray*}
Then, from Theorem~\ref{GapPP}, we have
\begin{eqnarray*}
	P_{W_w'}(o=1)=2^{-q'}f(w)^2,\\
	P_{V_w'}(o=1)=2^{-q}g(w)^2,
\end{eqnarray*}
which means
\begin{eqnarray*}
(1-2^{-r})^2\le\frac{2^{q}P_{V'_w}(o=1)}{2^{q'}P_{W'_w}(o=1)}
\end{eqnarray*}
for some polynomials $q$ and $q'$, and
uniform families $\{V_n'\}_n$ and $\{W_n'\}_n$ of polynomial-size quantum circuits.
Let us define $V_{|w|}$ and $W_{|w|}$
such that
\begin{eqnarray*}
P_{V_w}(o=1)&=&P_{V_w'}(o=1)2^{-q'},\\
P_{W_w}(o=1)&=&P_{W_w'}(o=1)2^{-q}.
\end{eqnarray*}
The circuit $V_{|w|}$ ($W_{|w|}$) can be constructed by simulating $V_{|w|}'$
($W_{|w|}'$) and outputting $o=1$ with probability $2^{-q'}$ 
($2^{-q}$) if and only if $V_{|w|}'$ ($W_{|w|}'$) outputs $o=1$. 
Then, we obtain
\begin{eqnarray*}
(1-2^{-r})^2\le\frac{P_{V_w}(o=1)}{P_{W_w}(o=1)}.
\end{eqnarray*}

Similarly, if $w\notin L$,
we have
\begin{eqnarray*}
&&\frac{g(w)^2}{f(w)^2}\le 2^{-2r}\\
&\Leftrightarrow&
\frac{P_{V_w}(o=1)}{P_{W_w}(o=1)}\le 2^{-2r}.
\end{eqnarray*}

Let us consider the following quantum circuit $R_n$:
It first flips two unbiased coins. If both are heads, $R_n$ simulates $W_n$. 
\begin{itemize}
\item[1.]
If $W_n$ outputs $o=1$, then $R_n$ outputs $o=0$ and $p=1$.
\item[2.]
If $W_n$ outputs $o=0$, then $R_n$ outputs $o=0$ and $p=0$. 
\end{itemize}
Otherwise, $R_n$ simulates $V_n$.
\begin{itemize}
\item[1.]
If $V_n$ outputs $o=1$, then $R_n$ outputs $o=1$ and $p=1$.
\item[2.]
If $V_n$ outputs $o=0$, then $R_n$ outputs $o=0$ and $p=0$. 
\end{itemize}

Then, 
\begin{eqnarray*}
P_{R_w}(p=1)&=&\frac{3}{4}P_{V_w}(o=1)+\frac{1}{4}P_{W_w}(o=1)\\
												&\ge&\frac{f(w)^2}{4\times2^{q+q'}}\\
												&>&\frac{1}{2^{q+q'+2}},
\end{eqnarray*}
and
\begin{eqnarray*}
P_{R_w}(o=1|p=1)&=&\frac{P_{R_w}(o=1,p=1)}{P_{R_w}(p=1)}\\
																&=&\frac{\frac{3}{4}P_{V_w}(o=1)}{\frac{3}{4}P_{V_w}(o=1)+\frac{1}{4}P_{W_w}(o=1)}.
\end{eqnarray*}

If $w\in L$,
\begin{eqnarray*}
P_{R_w}(o=1|p=1)&=&\frac{3P_{V_w}(o=1)}{3P_{V_w}(o=1)+P_{W_w}(o=1)}\\
&\ge&\frac{3P_{V_w}(o=1)}{3P_{V_w}(o=1)+\frac{P_{V_w}(o=1)}{(1-2^{-r})^2}}\\
&=&\frac{3-6\times2^{-r}+3\times2^{-2r}}
	{4-6\times2^{-r}+3\times2^{-2r}}\\
&\ge&\frac{3-6\times2^{-r}}
	{4+3\times\frac{1}{2}}\\
&\ge&\frac{1}{2}+\frac{1}{22}-\frac{12}{11}\times 2^{-r}.
\end{eqnarray*}
If $w\notin L$,
\begin{eqnarray*}
P_{R_w}(o=1|p=1)&=&\frac{3P_{V_w}(o=1)}{3P_{V_w}(o=1)+P_{W_w}(o=1)}\\
																&\le&\frac{3P_{V_w}(o=1)}{3P_{V_w}(o=1)+\frac{P_{V_w}(o=1)}{2^{-2r}}}\\
																&\le&3\times 2^{-2r}.
\end{eqnarray*}
Therefore, $L\in {\rm postBQP}$.
\end{proof}

\end{document}